\documentclass[singlecolumn]{starpaper}
\title{\huge \tool: Query-Efficient Black-Box Probabilistic Verification of Deep Learning Compiler Optimization}
\author{Zexin Li\textsuperscript{1,*,\dag}\starorcid{0000-0001-8758-2151}\quad
  Weibo Zhang\textsuperscript{1,*,\dag}\starorcid{0009-0007-3129-6165}\quad
  Linyi Li\textsuperscript{2}\starorcid{0000-0002-5403-3217}\quad
  Arvind Easwaran\textsuperscript{3}\starorcid{0000-0002-9628-3847}\quad
  Simin Chen\textsuperscript{1}\starorcid{0000-0001-5035-3398}
}
\staraffiliation{\textsuperscript{1}STAR Lab, George Mason University\quad
  \textsuperscript{2}Simon Fraser University\quad
  \textsuperscript{3}Nanyang Technological University}
\starshorttitle{\tool: Black-Box Probabilistic Verification of DL Compiler Optimization}
\startrack{SOFTWARE ENGINEERING / ML SYSTEMS}
\starstatus{PREPRINT}
\starversion{v1}

\starhideresources
\stargithub{ https://github.com/STAR-Computing/ProCerDCL}
\starkeywords{probabilistic verification; sequential testing; deep learning compilers; local robustness}
\starhidepartner

\setcitestyle{numbers,square,sort&compress}
\usepackage{subcaption}
\usepackage{wrapfig}
\usepackage{amsfonts}
\usepackage{textcomp}
\usepackage{adjustbox}
\usepackage{xspace}
\usepackage{multicol}
\usepackage{tcolorbox}
\usepackage{pifont}
\usepackage{float}
\usepackage{wasysym}
\usepackage[algo2e,linesnumbered,boxed,ruled]{algorithm2e}
\SetKwInOut{KwIn}{Input}
\SetKwInOut{KwOut}{Output}
\DontPrintSemicolon
\SetAlgoNoEnd
\SetKwIF{If}{ElseIf}{Else}{if}{then}{else if}{else}{}
\SetKwFor{For}{for}{do}{}
\SetKwFor{While}{while}{do}{}
\SetAlgoNoLine
\SetAlgoSkip{smallskip}
\SetAlFnt{\small}

\newtheorem{corollary}[theorem]{Corollary}

\def\tool{\texttt{ProCerDCL}\xspace}

\newcommand{\paperreview}[1]{}
\newcommand{\TODO}[1]{}

\newcommand{\etal}{{\it et al.}\xspace}

\newcommand\figref[1]{Fig.~\ref{#1}}

\makeatletter
\def\fp@lastchar#1{\fp@lc@loop#1\@nil}
\def\fp@lc@loop#1#2\@nil{%
  \def\fp@rest{#2}%
  \ifx\fp@rest\@empty \def\fp@last{#1}\else \fp@lc@loop#2\@nil \fi}
\newcommand{\fakeparagraph}[1]{%
  \def\fp@last{}%
  \fp@lastchar{#1}%
  \def\fp@q{?}\def\fp@e{!}\def\fp@c{:}%
  \noindent\textbf{#1%
    \ifx\fp@last\fp@q\else\ifx\fp@last\fp@e\else\ifx\fp@last\fp@c\else.\fi\fi\fi
  }%
}
\makeatother

\newcommand{\takeaway}[2]{
\par\vspace{2.5mm}\noindent
\begin{tikzpicture}
\node[
 draw=black, line width=0.9pt, fill=white,
 rounded corners=1.2pt,
 inner xsep=6pt, inner ysep=6pt,
 text width=\dimexpr0.97\columnwidth-12pt\relax,
 align=left, font=\small
] (tkbody) {\rule[-0.3ex]{0pt}{3.4ex}\ignorespaces #2};
\node[
 fill=black, text=white, font=\footnotesize\bfseries,
 inner xsep=6pt, inner ysep=2.5pt,
 anchor=south west
] (tkhead) at ([xshift=4pt,yshift=0.3pt]tkbody.north west) {\strut #1};
\end{tikzpicture}\par\vspace{1.5mm}
}

\makeatletter
\renewcommand{\star@brandrow}{%
  \noindent\begin{minipage}[c]{.40\textwidth}\starwordmark\end{minipage}%
  \hfill{\color{STARLine}\raisebox{-22pt}{\rule{.4pt}{48pt}}}\hfill
  \begin{minipage}[c]{.56\textwidth}\raggedleft\mbox{%
    \includegraphics[height=18pt]{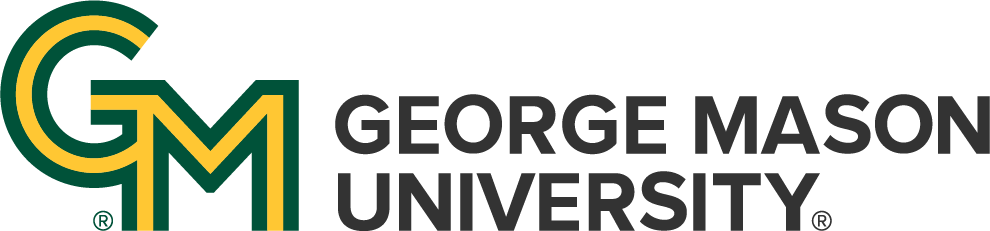}\hspace{12pt}%
    \includegraphics[height=18pt]{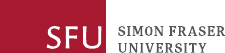}\hspace{12pt}%
    \includegraphics[height=18pt]{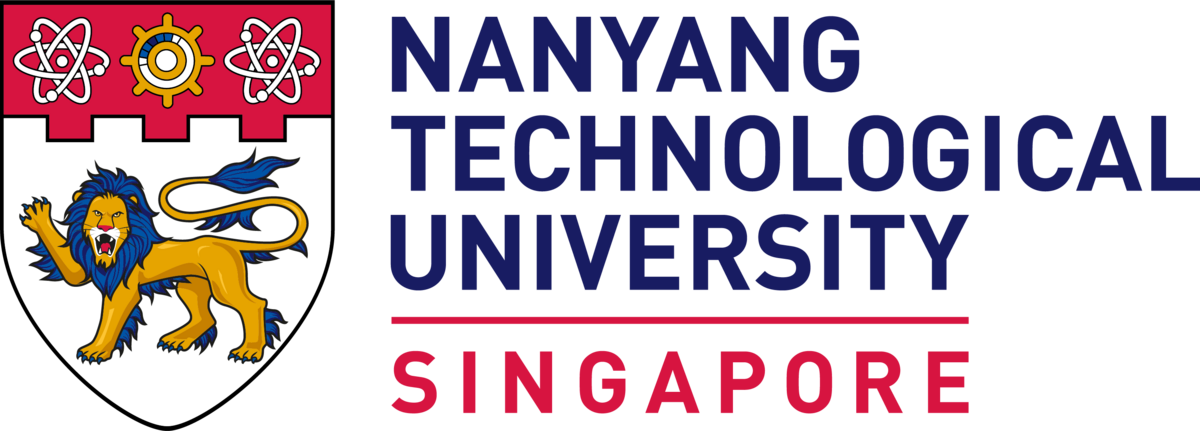}}%
  \end{minipage}\par}
\makeatother

\begin{document}
\maketitle
{\renewcommand{\thefootnote}{}\footnotetext{\textsuperscript{*}Equal contribution.\quad \textsuperscript{\dag}This work is done during the internship in \href{https://star-computing.github.io/}{STAR Lab}, George Mason University.}}
\begin{abstract}
Deep learning (DL) compiler optimizations can alter model semantics by changing finite-precision execution, potentially causing decision-level inconsistencies between reference and compiled models. Differential testing can reveal such inconsistencies on selected inputs, but observing consistency on tested cases does not, by itself, bound the probability of inconsistency on unseen inputs. Establishing trust under a specified input distribution therefore requires statistical guarantees that account for sampling uncertainty. Obtaining these guarantees, however, can require many expensive model executions.
We present \tool, a query-efficient framework for black-box probabilistic verification. Given a fixed, sampleable input distribution and a binary property oracle, \tool assesses the property's violation probability $p$ against a threshold $\theta$. Under independent sampling and stateless oracle behavior, acceptance asserts $p\le\theta$, rejection asserts $p>\theta+\eta$ for a user-specified separation margin, and an inconclusive verdict indicates insufficient evidence for either conclusion. The probability of each type of erroneous verdict is bounded by its specified error budget. To support this three-valued contract efficiently, \tool uses a single sequential test, eliminating the interval family required by the baseline. The test accounts for repeated sampling checks through either corrected exact binomial confidence intervals or an anytime-valid confidence sequence.
We apply \tool to probabilistic local model semantic consistency, where the property oracle detects decision-level inconsistencies between reference and compiled executions under a specified perturbation distribution. Our evaluation covers 24 tasks across TorchInductor and TensorRT and examines sampling efficiency, conclusiveness, and diagnostic behavior. For sampling efficiency, we use paired one-model clean-label radius searches: the baseline requires 3.6-4.2 times as many unique samples under matched numerical parameters while providing a weaker decision contract. For model semantic consistency, checks at $\theta=10^{-3}$ yield 817 acceptances, 19 inconclusive verdicts, and 4 rejections across 840 input--backend instances from 21 tasks. Case studies of two documented compiler faults further demonstrate what these local verdicts can diagnose. 
\end{abstract}

\section{Introduction}

Deep learning (DL) compilers improve inference efficiency by transforming model execution. Compilers such as TVM~\cite{tvm}, TensorRT~\cite{nvidia_tensorrt_docs}, and TorchInductor~\cite{ansel2024pytorch} apply optimizations including operator fusion and computation reordering. Because floating-point arithmetic is not associative, these transformations can change finite-precision results~\cite{goldberg1991floating}, potentially causing reference and compiled executions to produce different predictions on the same input~\cite{chen2025your}. We study \emph{model semantic consistency} at this decision level: whether compilation preserves the reference model's prediction. This property matters because compilation changes the model's predictive behavior that was already evaluated by developers. 
Decision-level inconsistencies can undermine the transfer of that evidence to the compiled model. Assessing model semantic consistency therefore helps developers determine whether efficiency improvements preserve the predictive behavior they intend to deploy.

Differential testing provides evidence about this property by comparing reference and compiled executions on selected inputs. It can reveal individual semantic inconsistencies, but observing consistency on tested cases does not, by itself, establish how reliably consistency holds on unseen inputs. For example, a compiled model may preserve the prediction at a regression input while producing different predictions under unnoticeable perturbations. A developer checking that input therefore faces a broader question: \emph{how frequently is model semantic consistency violated under a specified perturbation distribution?} A probabilistic specification makes this question precise by defining an input distribution and an acceptable violation probability. Such a specification could support local regression checks or comparisons of affected and patched builds against a fixed reference. 

Answering this distributional question requires \emph{probabilistic verification} that accounts for sampling uncertainty. Even when inputs are sampled independently from the specified distribution, an absence of observed violations need not imply a sufficiently low violation probability. If the true violation probability is $1\%$, for example, 100 independent draws contain no violations with probability $(1-0.01)^{100}\approx37\%$. Thus, a test campaign with no observed failures can still provide insufficient evidence for a below-$1\%$ requirement. A probabilistic verifier must connect the observed outcomes to an explicit bound on the risk of an erroneous verdict and allow an inconclusive result when the evidence is insufficient. Here, probability arises from sampling inputs, even when model executions are deterministic. The resulting guarantee complements differential testing by turning sampled observations into a statistically justified assessment of the specified requirement.

Obtaining sufficient evidence can be computationally expensive. Each semantic-consistency observation requires executing both the reference and compiled models, and small violation thresholds can require many samples even when no inconsistencies are observed. Repeating verification across inputs, compilation backends, or model revisions further increases this cost. Query efficiency is therefore essential to making these guarantees useful within a finite verification budget: the verifier should avoid unnecessary executions while preserving its statistical guarantees. 

We propose a query-efficient framework for black-box probabilistic verification of properties of DL compiler optimization. Formally, \tool verifies the following property. Given a fixed, sampleable distribution $\mathcal{D}$ and a binary violation oracle $\psi$, let $p=\Pr_{x\sim\mathcal{D}}[\psi(x)=1]$. For a user-specified threshold $\theta$,  separation margin $\eta>0$, and error budget $\delta$, acceptance (\textsc{Yes}) asserts $p\le\theta$, rejection (\textsc{No}) asserts $p>\theta+\eta$, and \textsc{None} indicates insufficient evidence. Under independent sampling and stateless oracle behavior, each direction of erroneous decision has probability at most $\delta$. In $(\theta,\theta+\eta]$, either decisive output is an error, so abstention is intended.

Inspired by quantitative verification and statistical model checking~\cite{baluta2021provero,younes2002probabilistic,legay2010statistical}, \tool introduces a single sequential verification procedure for the three-valued contract above. The baseline constructs an adaptive interval family, applies fixed-sample Chernoff tests, and distributes its error budget across those tests~\cite{baluta2021provero}. Its indifference-zone contract permits either decisive answer within the separation band. Our key insight is that \emph{a stricter contract, which treats either decisive answer within that band as erroneous, can be supported directly by a single sequential test}, without constructing an interval family or estimating $p$ to a prescribed absolute accuracy. Specifically, \tool accepts when a valid upper confidence bound establishes $p\le\theta$, rejects when a valid lower bound establishes $p>\theta+\eta$, and returns \textsc{None} if neither condition is established before a finite sampling cap. Using either Clopper--Pearson intervals corrected for repeated checks or an anytime-valid confidence sequence, this design eliminates error-budget allocation across constituent interval tests while preserving error control under repeated checks. Our contribution is the resulting single-test procedure and its direction-specific soundness and finite-cost analysis under the stricter contract. %

We apply \tool to \emph{probabilistic local model semantic consistency}, checking decision-level disagreement between reference and compiled executions under a specified perturbation distribution. The paired oracles measure either pointwise disagreement or the joint event that the reference retains the seed label while the compiled model leaves it. We also evaluate probabilistic local robustness through one-model clean-label radius searches, which lets us compare the two procedures on the property the baseline was designed for. That oracle compares the compiled prediction on a perturbation with the reference prediction at the seed. It measures model sensitivity and does not bound paired disagreement in general. These properties share the verification procedure but answer distinct questions. A verdict speaks only about the chosen distribution and oracle: an acceptance does not show that the two models agree on every input, and a rejection reveals a disagreement rather than proving a compiler bug.%

Our evaluation covers 24 tasks across TorchInductor and TensorRT, examining sampling efficiency, conclusiveness, and behavior on documented faults. In paired one-model clean-label radius searches on four settings, the baseline uses 3.6--4.2 times as many unique draws at matched numerical parameters under a weaker decision contract; forced-decision comparisons are reported separately. For two-model checks on the studied ResNet settings, the anytime-valid main test uses 0.58--0.90 of the interval-family mode's draws. Pointwise consistency checks at $\theta=10^{-3}$ return 817 acceptances, 19 inconclusive verdicts, and 4 rejections across 840 input--backend instances from 21 tasks. Two documented compiler faults illustrate local verdicts on affected and patched or workaround builds. Reference-model controls further show that the one-model radius is largely model-dominated: it mainly reflects the model's own sensitivity to noise, so a small radius does not by itself indicate a compilation error, which is why compiler consistency is assessed with the two-model properties. %

The paper makes three contributions:
\begin{itemize}[leftmargin=10pt]
\item \textbf{Probabilistic Verification of Model Semantic Consistency.} We formulate the problem of black-box verification for model semantic consistency between reference and compiled executions under a specified input distribution, with direction-specific error guarantees.

\item \textbf{Query-Efficient Sequential Verification.} We develop a single sequential test that eliminates the baseline's interval family while enforcing a stricter decision contract, and establish its soundness and finite sampling bound under the stated assumptions.

\item \textbf{Empirical Evaluation Across Models and Compilers.} We evaluate sampling efficiency, conclusiveness, and local model semantic consistency across DL tasks and compiler backends, distinguish one-model robustness from paired semantic consistency, and demonstrate the diagnostic scope of local verdicts through two documented compiler faults.
\end{itemize}

\section{Background}

\fakeparagraph{Deep Learning Compiler}
DL compilers~\cite{SparseTIR, rewriting, opt_survey, zheng2020ansor} such as \texttt{TVM}, \texttt{XLA}, \texttt{Glow}, TorchInductor, and TensorRT translate a model written in a framework such as \texttt{PyTorch} into hardware-efficient kernels for a target platform. As illustrated in \figref{fig:cl}, the frontend maps the model to a high-level intermediate representation (IR) and applies platform-independent transformations such as operator fusion; the backend lowers it to a hardware-oriented IR, applies target-specific optimizations such as vectorization and loop unrolling, and emits machine code. Unlike general-purpose frameworks~\cite{abadi2016tensorflow}, DL compilers exploit both model structure and hardware characteristics, which makes them the standard path for efficient deployment.

\begin{wrapfigure}{r}{0.5\textwidth}
    \centering

    \includegraphics[width=0.5\textwidth]{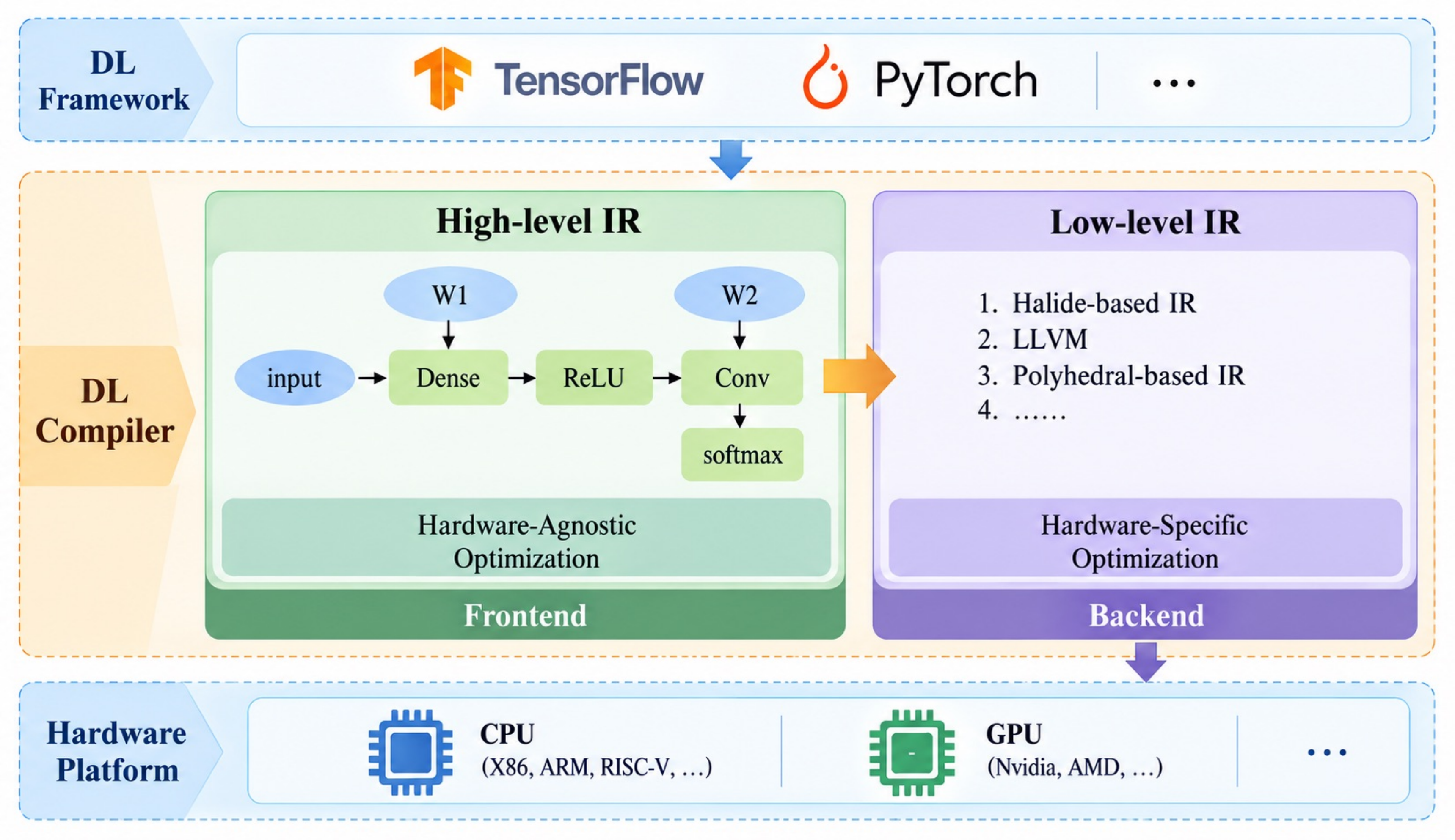}
    \caption{Overview of the workflow of DL compilers.}
    \label{fig:cl}

\end{wrapfigure}

\fakeparagraph{Numerical Errors \& Inconsistency in DL Compilers} Numerical deviations are differences in outputs introduced by floating-point arithmetic, compiler optimizations or hardware-specific execution~\cite{zhang2021predoo, di2017comprehensive, guan2023comprehensive, benz2012dynamic, chiang2014efficient, kloberdanz2022deepstability, sanchez2018finding, hao2022tale}. Because floating-point operations are not associative, optimizations that reorder or regroup them, such as fusion, algebraic simplification, vectorization and loop reordering, can change rounding~\cite{goldberg1991floating}, so a compiled model that is mathematically equivalent to its source can still diverge from it at run time, a phenomenon called \emph{optimization inconsistency}~\cite{chen2025your, chen2023dycl}. Floating-point semantics are known to depend on compiler choices~\cite{monniaux2008pitfalls}, and DL compiler testing has found transformations that change model semantics~\cite{liu2022nnsmith}.

\fakeparagraph{Probability-threshold verification} Statistical verification uses observations under a specified distribution to assess a quantitative property~\cite{younes2002probabilistic,legay2010statistical,baluta2021provero}. For a fixed distribution $\mathcal{D}$ and binary violation oracle $\psi$, define $p=\Pr_{x\sim\mathcal{D}}[\psi(x)=1]$. If draws are independent and the oracle is a fixed function of its input, the observations $Y_j=\psi(x_j)$ are i.i.d.\ Bernoulli($p$), and $k_n=\sum_{j=1}^nY_j$ is binomial. Threshold verification asks whether $p$ meets a bound; it need not produce an accurate numerical estimate of $p$. The reference execution used by an agreement oracle specifies which behavior is compared, not ground-truth correctness of either execution.

\fakeparagraph{Fixed-sample and sequential evidence} Chernoff and Hoeffding bounds~\cite{chernoff1952measure,hoeffding1963probability} give sample budgets for probability estimates or tests. At a fixed sample size $n$, the Clopper--Pearson interval $[L_n,U_n]$ has binomial coverage at least $1-\alpha$~\cite{clopper1934use,agresti1998approximate,brown2001interval}. Repeatedly checking ordinary intervals at the same level does not automatically preserve that coverage at a data-dependent stopping time. One solution is to preallocate checkpoint errors $\alpha_t$ with $\sum_t\alpha_t\le\delta$ and apply a union bound~\cite{dunn1961multiple}. Another is an anytime-valid confidence sequence, whose coverage holds simultaneously over its checkpoints. Section~\ref{sec:problem} defines our three-valued contract, and Section~\ref{subsec:m3} explains how these established tools implement it.

\section{Approach}

\subsection{Problem Formulation}
\label{sec:problem}

\fakeparagraph{Inputs and scope}
Let $\mathcal{M}$ denote a reference DL model and $\mathcal{C}=\mathrm{Compile}(\mathcal{M})$ its compiled execution, with $\mathcal{M}(x)$ and $\mathcal{C}(x)$ denoting their predicted labels on input $x$. Given a fixed, sampleable distribution $\mathcal{D}$ over admissible inputs, we define the binary violation oracle for model semantic consistency as
\[
\psi(x,\mathcal{C},\mathcal{M})
=\mathbf{I}\!\left[\mathcal{C}(x)\neq\mathcal{M}(x)\right].
\]
The oracle evaluates both executions on the same input and reports a violation exactly when their predictions differ. The target quantity is
\[
p=\Pr_{x\sim\mathcal{D}}\!\left[\psi(x,\mathcal{C},\mathcal{M})=1\right],
\]
the probability that compilation changes the reference model's prediction under $\mathcal{D}$.

Each guarantee is specific to the chosen distribution and oracle. This enables developers to assess whether the probability of semantic inconsistency meets an explicit tolerance on inputs relevant to their requirements, such as perturbations around regression inputs or samples from a justified application-input distribution. The guarantee does not establish consistency for every input or global compiler correctness.

\fakeparagraph{Local instantiation}
In this paper, we use a local perturbation distribution to assess model semantic consistency around a seed input. For a seed $x_0$ and radius $\varepsilon$, we define
\[
\mathcal{D}=\mathrm{Unif}\!\left(B_q(x_0,\varepsilon)\cap\mathcal{X}\right),
\qquad q\in\{2,\infty\},
\]
where $B_q(x_0,\varepsilon)$ is the $\ell_q$ ball centered at $x_0$ and $\mathcal{X}$ is the admissible input domain. This neighborhood-based formulation lets developers assess how reliably compilation preserves predictions around a regression input. The framework also supports other input distributions by replacing the sampler, provided that it produces independent draws from the specified distribution. The sampling domain, preprocessing, and execution configuration are fixed as part of the specification; the chosen perturbation distribution need not represent deployment traffic.

Each radius defines a separate input distribution. A radius search therefore reports the largest \emph{tested} radius certified \textsc{Yes}.

\fakeparagraph{Assumptions}
Throughout each verification run, we fix the input distribution, models, and execution configurations and make the following assumptions:
\begin{itemize}
    \item[\textbf{A1.}] \emph{Independent sampling.} The sampler produces independent draws $x_j\sim\mathcal{D}$ from the same specified distribution. This assumption can be satisfied by independently generating perturbations from a fixed distribution or independently sampling with replacement from a fixed dataset.

    \item[\textbf{A2.}] \emph{Black-box oracle access.} The supplied binary violation oracle can evaluate each sampled input. For model semantic consistency, this requires executing $\mathcal{M}$ and $\mathcal{C}$ on the same input and comparing their predicted labels. This requirement is practical when both implementations are executable, as it requires neither compiler internals nor model gradients.

    \item[\textbf{A3.}] \emph{Fixed, stateless oracle.} The oracle returns a fixed binary response for each input, independent of previous evaluations and other inputs evaluated alongside it. This assumption is reasonable for DL inference configured with deterministic execution, fixed preprocessing, and no mutable state or dependence on batch composition. 
\end{itemize}

Together, A1 and A3 imply that $Y_j=\psi(x_j, \mathcal{C}, \mathcal{M})$ are i.i.d.\ Bernoulli($p$) observations, where $p=\Pr_{x\sim\mathcal{D}}[\psi(x)=1]$; A2 ensures that the verifier can obtain these observations. The verifier consumes only this binary stream. We measure its statistical cost by the number of sampled oracle observations, counting repeated input values as separate draws. 

\fakeparagraph{Three-valued verification contract}
Given a threshold $\theta\in(0,1)$, a separation margin $\eta>0$ with $\theta+\eta<1$, and a direction-specific error budget $\delta\in(0,1)$, the verifier checks two distinct assertions:
\[
    \textsc{Yes}:\ p\le\theta,
    \qquad
    \textsc{No}:\ p>\theta+\eta.
\]
The outcome \textsc{None} reports that neither assertion is supported by the available evidence. A verdict concerns the supplied distribution and oracle, rather than unconditional correctness of the compiled model.

\begin{definition}[\textbf{Probabilistic Sound Quantitative Certifier}]
\label{def:sound}
A verifier $\mathcal{A}(\mathcal{D},\psi,\theta,\eta,\delta)$ is \emph{sound} if, for every fixed distribution and oracle satisfying A1--A3 and every admissible parameter setting,
\begin{equation}
\Pr\!\big[\mathcal{A}=\textsc{Yes}\;\wedge\;p>\theta\big]\le\delta, \qquad \Pr\!\big[\mathcal{A}=\textsc{No}\;\wedge\;p\le\theta+\eta\big]\le\delta.
\end{equation}
The probabilities are over the random sample stream for the fixed $p$.
\end{definition}

Inside $(\theta,\theta+\eta]$, either decisive output is erroneous, so abstention is intended, subject to these error bounds. \textsc{None} can also occur outside this band. The bounds are frequentist guarantees, not posterior probabilities of individual verdicts, and impose no lower bound on conclusiveness. In contrast, the baseline's indifference-zone contract permits either decisive verdict inside the band~\cite{baluta2021provero}.

\fakeparagraph{Notation}
Table~\ref{tab:symbols-notations} summarizes the notation used by \tool.

\begin{table}[H]
\centering
\caption{Notation of \tool. }
\label{tab:symbols-notations}
\scriptsize
\begin{tabularx}{\linewidth}{>{\centering\arraybackslash}p{0.24\linewidth} X}
\toprule
\textbf{Symbol} & \textbf{Meaning} \\
\midrule
$\mathcal{M},\mathcal{C}$ & Reference and compiled executions; their outputs denote predicted labels. \\
$\mathcal{D}$ & Fixed, sampleable input distribution. \\
$\psi(x)$ & Shorthand for the supplied binary oracle $\psi(x,\mathcal{C},\mathcal{M})$. \\
$Y_j,p$ & Observation $Y_j=\psi(x_j)$ and unknown probability $p=\Pr(Y_j=1)$. \\
$\theta,\eta$ & Acceptance threshold and separation margin; rejection asserts $p>\theta+\eta$. \\
$\delta$ & Direction-specific error budget for one verification instance. \\
$B$ & Positive number of observations between statistical checkpoints. \\
$\mathcal{I}$ & Optional family of entries $(\theta_{1,i},\theta_{2,i},\mathsf{side}_i)$. \\
$\ell_i,\delta_i$ & Test width $\theta_{2,i}-\theta_{1,i}$ and error budget; the main test uses $\eta,\delta$. \\
$N_i,T_i$ & Truncation budget in Eq.~\eqref{eq:Ni} and checkpoint count $\lceil N_i/B\rceil$. \\
$n_t,k_t$ & Prefix length $tB$ and violation count at checkpoint $t$. \\
$L_{i,t},U_{i,t}$ & Lower and upper confidence bounds used by tester $i$. \\
$\tau_i$ & Prefix length consumed before tester $i$ terminates. \\
$\mathcal{P}$ & Pool of observations for the fixed sampler and oracle, updated by reference. \\
$N_{\mathrm{total}}$ & Number of distinct oracle observations consumed by the verifier. \\
\bottomrule
\end{tabularx}
\end{table}

\subsection{Challenges and Key Insights}
\label{sec:challenges}

The verification task in \S~\ref{sec:problem} presents unique challenges, and we address each through the following insights.

\fakeparagraph{Reduce executions without sacrificing soundness}
The first challenge is to obtain sufficient evidence for a sound verdict without spending unnecessary model executions. Our key insight is that verification requires establishing a threshold assertion, not estimating $p$ to a prescribed absolute accuracy. We therefore examine observations sequentially and stop as soon as the evidence establishes either $p\le\theta$ or $p>\theta+\eta$. The sampling effort adapts to the observed evidence, with a finite cap bounding the cost when neither assertion can be established.

\fakeparagraph{Preserve error guarantees under early stopping}
Checking the evidence repeatedly creates a second challenge: ordinary fixed-sample confidence intervals do not retain their nominal coverage under repeated inspection. Our key insight is to base stopping decisions on bounds whose error control holds across checkpoints. If these bounds remain valid, every decisive verdict follows from a valid inequality. At checkpoint $t$, let $[L_t,U_t]$ denote the bounds for $p$. For a test on interval $(\theta_1,\theta_2)$, where $(\theta_1,\theta_2)=(\theta,\theta+\eta)$ in the main mode, we use the stopping rules
\begin{equation}
    U_t\le\theta_1\ \Rightarrow\ \textsc{Yes},
    \qquad
    L_t>\theta_2\ \Rightarrow\ \textsc{No}.
    \label{eq:cp-rules}
\end{equation}
An erroneous acceptance requires an upper-bound failure, while an erroneous rejection requires a lower-bound failure. Controlling these failures across checkpoints therefore provides direction-specific error guarantees. We implement this control using either Clopper--Pearson intervals corrected across predetermined checkpoints or an anytime-valid confidence sequence. Section~\ref{subsec:m3} specifies the bounds, sampling cap, and stopping procedure.

\fakeparagraph{Handle insufficient evidence within a single test}
The third challenge is that neither decisive assertion holds when $p\in(\theta,\theta+\eta]$, and finite observations may also be insufficient outside this band. Our key insight is to treat abstention as an explicit outcome and use the same observation stream to assess both assertions. A single sequential test checks the upper bound for acceptance and the lower bound for rejection, returning \textsc{None} if neither condition holds by the cap. This directly supports the three-valued contract without separate tests over an interval family, while retaining the prescribed direction-specific error budget $\delta$ across repeated checks.

\subsection{Verification Procedure and Guarantees}
\label{sec:design-overview}

\tool takes the sampler and binary oracle from Section~\ref{sec:problem}, together with user provided $(\theta,\eta,\delta)$ and a positive checkpoint batch size $B$. A checkpoint follows every $B$ oracle observations; $B$ describes the statistical checking schedule and need not equal the runtime's GPU batch size. We present the main single-test procedure first, followed by the optional interval-family mode.

\subsubsection{Main algorithm}

The main mode invokes Algorithm~\ref{alg:sequential-test} once with $(\theta_1,\theta_2,\delta_i,\mathcal{P})=(\theta,\theta+\eta,\delta,[\,])$, so every observation is fresh and no inter-test reuse occurs. The tester draws observations, updates the violation count, and applies Eq.~\eqref{eq:cp-rules} at each checkpoint. A decisive result establishes the corresponding probability assertion under the error guarantee; exhausting the cap produces \textsc{None}. The pool is a shared interface used by both modes. In the main algorithm it supplies fresh observations and performs no inter-test reuse.

\begin{theorem}[Fixed-distribution Soundness and Query Cap]
\label{thm:main}
Under A1--A3, for admissible $(\theta,\eta,\delta)$ and integer $B\ge1$, this single-test instantiation of Algorithm~\ref{alg:sequential-test} satisfies:
\begin{enumerate}
    \item[(a)] \emph{Soundness.} For every fixed $p$,
    \begin{align*}
        \Pr[\text{output}=\textsc{Yes}\ \wedge\ p>\theta]&\le\delta,\\
        \Pr[\text{output}=\textsc{No}\ \wedge\ p\le\theta+\eta]&\le\delta.
    \end{align*}
    The probability of \textsc{None} is not bounded by this contract.
    \item[(b)] \emph{Observation cap.} The number of oracle observations is at most $\lceil N/B\rceil B$, where $N$ is Eq.~\eqref{eq:Ni} evaluated at $(\theta,\theta+\eta,\delta)$. For the consistency oracle, these observations require at most $2\lceil N/B\rceil B$ model executions, excluding additional runtime buffering.
\end{enumerate}
\end{theorem}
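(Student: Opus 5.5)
The plan is to reduce both soundness claims to a single event on which the confidence bounds hold at every checkpoint. Then a separate counting argument handles the observation cap. Under A1 and A3, the stream $Y_1,Y_2,\dots$ is i.i.d.\ Bernoulli($p$). The tester stops at the first checkpoint $t\le T=\lceil N/B\rceil$ where $U_t\le\theta$ or $L_t>\theta+\eta$, and otherwise returns \textsc{None} after checkpoint $T$. Because the checkpoint schedule $n_t=tB$ and the cap $T$ depend only on $(\theta,\eta,\delta,B)$ and not on the data, the decision is a deterministic function of the prefix counts $k_1,\dots,k_T$.

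For (a), fix $p>\theta$. The output \textsc{Yes} implies there is some $t\le T$ with $U_t\le\theta<p$, so it is contained in the event $E_U=\{\exists t\le T: U_t<p\}$. Similarly, fix $p\le\theta+\eta$. The output \textsc{No} implies some $t$ with $L_t>\theta+\eta\ge p$, so it is contained in $E_L=\{\exists t\le T:L_t>p\}$. It therefore suffices to show $\Pr[E_U]\le\delta$ and $\Pr[E_L]\le\delta$ for every fixed $p$. When $p\le\theta$ the first probability in (a) is zero trivially, and likewise for the second when $p>\theta+\eta$.

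The bound on $E_U$ and $E_L$ is argued separately for the two bound constructions.
\begin{itemize}
\item In the corrected Clopper--Pearson mode, each checkpoint uses one-sided exact bounds at levels $\alpha_t$ with $\sum_{t\le T}\alpha_t\le\delta$, with a separate allocation for each side, as specified in Section~\ref{subsec:m3}. The one-sided exact binomial guarantee gives $\Pr[U_t<p]\le\alpha_t$ at the fixed sample size $n_t$, and a union bound over the predetermined checkpoints then gives $\Pr[E_U]\le\delta$. The same argument gives $\Pr[E_L]\le\delta$.
\item In the anytime-valid mode, the confidence sequence's defining property is that, per side, $\Pr[\exists t: U_t<p]\le\delta$ simultaneously over all $t$. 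The bound therefore holds at any data-dependent stopping time, in particular at the one used here.
\end{itemize}
The main obstacle I expect is a bookkeeping one: confirming that the construction in Section~\ref{subsec:m3} allocates $\delta$ per direction and not $\delta/2$ per side of a two-sided interval. If the bounds are instead two-sided at level $\delta$, the direction-specific claims still follow, just conservatively. I also need the bounds to be valid under a fixed, non-adaptive schedule, so that the union bound applies without stopping-time subtleties. Early stopping only removes later checks, so the event inclusions above remain valid.

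For (b), the loop draws observations in blocks of $B$ and performs at most $T=\lceil N/B\rceil$ checkpoints, where $N$ is Eq.~\eqref{eq:Ni} at $(\theta,\theta+\eta,\delta)$. Since $\mathcal{P}$ is empty in the main mode, every observation is a fresh draw, and the total is at most $TB=\lceil N/B\rceil B$. Each consistency-oracle evaluation runs $\mathcal{M}$ and $\mathcal{C}$ once on the same input (A2), which gives at most $2\lceil N/B\rceil B$ executions. Finally, the contract places no constraint on $\Pr[\textsc{None}]$, which is consistent with the statement.
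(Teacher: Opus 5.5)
Your proposal is correct and follows essentially the paper's route. The paper applies Lemma~\ref{lem:tester-sound} with $(\theta,\theta+\eta,\delta)$. That lemma's proof places both directional error events inside the union of checkpoint noncoverage events, bounded via Corollary~\ref{cor:seq-union}, and the cap then follows from Eq.~\eqref{eq:tau-le-N} with an initially empty pool. On your bookkeeping worry: Section~\ref{subsec:m3} uses two-sided Clopper--Pearson intervals with total noncoverage $\delta/T$ and tail budgets $\delta/(2T)$ at each checkpoint. Your one-sided union bound therefore gives $\delta/2$ per direction, whereas the paper bounds the combined two-sided failure event by $\delta$.
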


\begin{proof}
Apply Lemma~\ref{lem:tester-sound} with $(\theta_{1,1},\theta_{2,1},\delta_1)=(\theta,\theta+\eta,\delta)$. Each observation is fresh because the pool is initially empty and there is only one tester. The cap follows from Eq.~\eqref{eq:tau-le-N}. The CS variant follows in the same way from its anytime-valid error bound, with the same truncation cap.
\end{proof}

The theorem bounds cost and erroneous decisions for one fixed distribution and oracle.

\subsubsection{Optional interval-family mode}

The compatibility mode combines four modules: M1 constructs a deterministic interval family; M2 assigns its error budgets; M3 is the sequential tester shared with the main algorithm; and M4 stores observations for reuse. Only this mode needs M1 and M2 or reuses observations across tests. Algorithm~\ref{alg:adaptive-binp-certify-reuse} constructs the complete family and allocates its budgets before observing any oracle responses. It then visits the intervals in order, with each tester reading from the beginning of the same stream. A left interval can terminate the procedure only with \textsc{Yes}, a right interval only with \textsc{No}, and the final interval $(\theta,\theta+\eta)$ returns any of the three outcomes.

\begin{algorithm2e}

\caption{AdaptiveBinPCertifyWithReuse}
\label{alg:adaptive-binp-certify-reuse}

\KwIn{Sampler for $\mathcal{D}$ and fixed binary oracle $\psi$; $\theta \in (0,1)$, $\eta > 0$ with $\theta+\eta < 1$, $\delta \in (0,1)$, $B \in \mathbb{N}_{\ge1}$}
\KwOut{verdict $\in \{\textsc{Yes}, \textsc{No}, \textsc{None}\}$}

$\mathcal{I} \Leftarrow [\,]$, $(\theta_{1l}, \theta_{2l}) \Leftarrow (0, 0)$, $(\theta_{1r}, \theta_{2r}) \Leftarrow (0, 0)$\;

\While{true}{
    $(\theta_{1l}, \theta_{2l}) \Leftarrow \textsc{CreateInterval}(\theta, \theta_{1l}, \theta_{2l}, \eta, \textsc{Left})$\;
    \If{$\theta_{2l} - \theta_{1l} > \eta$}{
        append $(\theta_{1l}, \theta_{2l}, \textsc{Left})$ to $\mathcal{I}$\;
    }
    
    $(\theta_{1r}, \theta_{2r}) \Leftarrow \textsc{CreateInterval}(\theta, \theta_{1r}, \theta_{2r}, \eta, \textsc{Right})$\;
    \If{$\theta_{2r} - \theta_{1r} > \eta$}{
        append $(\theta_{1r}, \theta_{2r}, \textsc{Right})$ to $\mathcal{I}$\;
    }
    
    \If{$\theta_{2l} - \theta_{1l} \le \eta$ \textbf{and} $\theta_{2r} - \theta_{1r} \le \eta$}{
        append $(\theta, \theta+\eta, \textsc{Final})$ to $\mathcal{I}$\;
        break\;
    }
}

$m \Leftarrow |\mathcal{I}|$, $\{\delta_i\}_{i=1}^{m} \Leftarrow \textsc{Allocate}(\mathcal{I}, \delta)$, $\mathcal{P} \Leftarrow [\,]$\;

\For{$i = 1$ \KwTo $m$}{
    $(\theta_{1,i}, \theta_{2,i}, \mathsf{side}_i) \Leftarrow \mathcal{I}[i]$\;
    $r \Leftarrow \textsc{SequentialTest}(\mathcal{D},\psi,\theta_{1,i}, \theta_{2,i}, \delta_i, B, \mathcal{P})$\;
    \If{$\mathsf{side}_i = \textsc{Left}$ \textbf{and} $r = \textsc{Yes}$}{
        \Return \textsc{Yes}\;
    }
    \If{$\mathsf{side}_i = \textsc{Right}$ \textbf{and} $r = \textsc{No}$}{
        \Return \textsc{No}\;
    }
    \If{$\mathsf{side}_i = \textsc{Final}$}{
        \Return $r$\;
    }
}

\Return \textsc{None}\;
\end{algorithm2e}

\begin{proposition}[Interval-family mode]
\label{prop:family}
Under A1--A3, Algorithm~\ref{alg:adaptive-binp-certify-reuse} satisfies Definition~\ref{def:sound}. Its number of distinct oracle observations is
\[
N_{\mathrm{total}}=\max_{i\in\mathcal{V}}\tau_i
\le\max_{i\in\mathcal{V}}\lceil N_i/B\rceil B,
\]
where $\mathcal{V}$ is the set of visited intervals, $\tau_i$ is tester $i$'s consumed prefix length, and $N_i$ is its cap from Eq.~\eqref{eq:Ni}.
\end{proposition}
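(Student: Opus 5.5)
The plan is to reduce both soundness inequalities to per-tester bound failures and then sum the allocated budgets. I will use Lemma~\ref{lem:tester-sound} for each tester $i$ in the family $\mathcal{I}$, with parameters $(\theta_{1,i},\theta_{2,i},\delta_i)$. The lemma is stated later, but its use in the proof of Theorem~\ref{thm:main} indicates what it gives. With probability at least $1-\delta_i$, the upper bounds satisfy $U_{i,t}\ge p$ at every checkpoint $t\le T_i$, and similarly for the lower bounds. Consequently, tester $i$ outputs \textsc{Yes} with $p>\theta_{1,i}$ with probability at most $\delta_i$, and outputs \textsc{No} with $p\le\theta_{2,i}$ with probability at most $\delta_i$.

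The step needing care is observation reuse through the pool $\mathcal{P}$. Each tester reads the same stream $Y_1,Y_2,\dots$ from its beginning, and this stream is i.i.d.\ Bernoulli($p$) under A1 and A3. The pool only caches prefixes; it never selects or reorders observations. Hence the marginal law of each tester's prefix, and of its stopping decision, is exactly that of a tester run on fresh data. The per-tester lemma therefore applies marginally to each $i$, even though the testers are dependent. No independence across testers is needed, because the union bound only uses marginal probabilities.

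For the \textsc{Yes} direction, fix $p>\theta$. The algorithm outputs \textsc{Yes} only via some tester $i$ of side \textsc{Left} or \textsc{Final} returning \textsc{Yes}. The family is built by \textsc{CreateInterval} so that every left interval has $\theta_{1,i}\le\theta_{2,i}\le\theta$, and the final one has $\theta_{1,i}=\theta$. In each case a wrong \textsc{Yes} at tester $i$ means $U_{i,t}\le\theta_{1,i}\le\theta<p$, which is an upper-bound failure for that tester. A union bound gives probability at most $\sum_{i}\delta_i\le\delta$. Here I use the property of \textsc{Allocate} that $\sum_{i=1}^m\delta_i\le\delta$, as for the baseline allocation.

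The \textsc{No} direction is symmetric. Right intervals satisfy $\theta_{1,i}\ge\theta+\eta$, and the final interval has $\theta_{2,i}=\theta+\eta$. A wrong \textsc{No} at tester $i$ therefore requires $L_{i,t}>\theta_{2,i}\ge\theta+\eta\ge p$, which is a lower-bound failure. I expect this ordering of the intervals relative to $\theta$ and $\theta+\eta$ to be the main thing to verify from the definition of \textsc{CreateInterval}.

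For the cost claim, the pool stores the stream once. Tester $i$ consumes the prefix of length $\tau_i$, and a draw is fresh only when $\tau_i$ exceeds the current pool length. Induction over the visited order gives a pool size of $\max_{j\le i}\tau_j$ after tester $i$. The number of distinct observations is then $N_{\mathrm{total}}=\max_{i\in\mathcal{V}}\tau_i$. Finally, $\tau_i\le\lceil N_i/B\rceil B$ by Eq.~\eqref{eq:tau-le-N}, since testers stop only at checkpoints and at most $T_i=\lceil N_i/B\rceil$ of them occur.
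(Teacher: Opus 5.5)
Your proposal is correct and follows essentially the same route as the paper. It combines per-tester error control (Lemma~\ref{lem:tester-sound}), side-specific propagation using $\theta_{1,i}\le\theta$ for left and final intervals and $\theta_{2,i}\ge\theta+\eta$ for right and final intervals with a union bound via $\sum_i\delta_i=\delta$ (Lemma~\ref{lem:alloc-sound}), the shared-stream marginal argument for reuse over the family fixed before sampling (Lemma~\ref{lem:reuse-soundness}), and the observation that the pool grows only to the longest consumed prefix.
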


Soundness follows from per-tester error control (Lemma~\ref{lem:tester-sound}), side-specific propagation and budget allocation (Lemma~\ref{lem:alloc-sound}), and the fixed-stream reuse argument (Lemma~\ref{lem:reuse-soundness}). The pool grows only when a tester needs a longer prefix, giving the maximum rather than the sum of consumed prefix lengths. Section~\ref{sec:detail-design} supplies these details. Both modes retain the three-valued contract; the forced-decision variants evaluated separately satisfy only the baseline's weaker contract.

\subsection{Sequential Test and Interval-Family Details}
\label{sec:detail-design}

We first specify the core sequential tester (M3) and its error and cost guarantees. We then describe the scheduler (M1), allocator (M2), and observation reuse (M4) used by the optional interval-family mode. Throughout, $\psi(x)$ is the fixed binary oracle supplied to the verification instance.

\subsubsection{Sequential tester (M3)}
\label{subsec:m3}

\paragraph{Cap and checkpoints.}
For thresholds $0\le\theta_{1,i}<\theta_{2,i}\le1$, let $\ell_i=\theta_{2,i}-\theta_{1,i}$ and $\delta_i\in(0,1)$. We retain Chernoff-derived sample budget as a finite truncation cap:
\begin{equation}
    N_i=\left\lceil
        \frac{\bigl(\sqrt{3\theta_{1,i}}+\sqrt{2\theta_{2,i}}\bigr)^2}{\ell_i^2}
        \ln\frac{1}{\delta_i}
    \right\rceil.
    \label{eq:Ni}
\end{equation}
The main algorithm uses $\ell_1=\eta$ and $\delta_1=\delta$. The interval-family mode uses the allocation in Eq.~\eqref{eq:alloc}. The budget scales as $O(\log(1/\delta_i)/\ell_i^2)$; its role here is to bound cost, not to guarantee a decisive verdict from the sequential rule.

For integer $B\ge1$, set $T_i=\lceil N_i/B\rceil$ before sampling. Checkpoints occur at prefix lengths $n_t=tB$, for $t=1,\ldots,T_i$. Rounding to complete batches permits at most $B-1$ observations beyond $N_i$.

\paragraph{Confidence bounds and stopping.}
At checkpoint $t$, let $k_t=\sum_{j=1}^{n_t}Y_j$. The CP tester computes the exact two-sided Clopper--Pearson interval $[L_{i,t},U_{i,t}]$ from $(k_t,n_t)$~\cite{clopper1934use}, using total noncoverage budget $\delta_{i,t}=\delta_i/T_i$ and tail budgets $\delta_{i,t}/2$. The usual boundary conventions are $L_{i,t}=0$ when $k_t=0$ and $U_{i,t}=1$ when $k_t=n_t$. It applies Eq.~\eqref{eq:cp-rules} and returns \textsc{None} if neither condition holds at the last checkpoint.

\begin{corollary}[Sequential Union Bound]
\label{cor:seq-union}
For fixed thresholds and $T_i$, suppose the interval at each predetermined checkpoint has noncoverage probability at most $\delta_i/T_i$. Then
\[
\Pr\!\left[\bigcup_{t=1}^{T_i}\{p\notin[L_{i,t},U_{i,t}]\}\right]
\le\sum_{t=1}^{T_i}\frac{\delta_i}{T_i}=\delta_i.
\]
No independence between checkpoint intervals is required.
\end{corollary}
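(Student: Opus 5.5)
The statement to prove is Corollary~\ref{cor:seq-union}. The plan is to derive it directly from the finite (Boole) union bound. That inequality holds for any countable family of events on a common probability space, regardless of their dependence structure.

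\textbf{Step 1: fix the probability space and the events.} Under A1 and A3, the observations $Y_1,Y_2,\ldots$ are i.i.d.\ Bernoulli($p$) on one common probability space. Because $B$, $N_i$ and $T_i=\lceil N_i/B\rceil$ are determined before sampling, each checkpoint $n_t=tB$ is a deterministic integer. For $t=1,\ldots,T_i$, define the event
\[
E_t=\{p\notin[L_{i,t},U_{i,t}]\}.
\]
Here $[L_{i,t},U_{i,t}]$ is a measurable function of $(k_{t},n_t)$, with $k_t=\sum_{j\le n_t}Y_j\sim\mathrm{Bin}(n_t,p)$. So $E_t$ is a well-defined event. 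It is also defined whether or not the tester actually stops before $t$: we treat the full stream of intervals as a virtual object, even though the algorithm may not compute all of them.

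\textbf{Step 2: apply subadditivity.} By the hypothesis of the corollary, $\Pr[E_t]\le\delta_i/T_i$ for each $t$. For the CP tester this hypothesis is the exact Clopper--Pearson coverage at fixed sample size $n_t$ with total level $\delta_{i,t}=\delta_i/T_i$, split into two tails of $\delta_{i,t}/2$; this is standard~\cite{clopper1934use}. Countable subadditivity then gives
\[
\Pr\Bigl[\bigcup_t E_t\Bigr]\le\sum_t\Pr[E_t]\le T_i\cdot\delta_i/T_i=\delta_i.
\]
Subadditivity uses no independence. This justifies the remark that the checkpoint intervals, which are built from nested prefixes and are therefore strongly dependent, need not be independent.

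\textbf{Main obstacle.} The one subtle point is that the algorithm inspects only a data-dependent prefix of the checkpoints. A naive reading would apply fixed-$n$ coverage at the random stopping time $\tau_i$, which is invalid. The fix is the ordering of quantifiers used in Step 1. The union over \emph{all} predetermined checkpoints contains the noncoverage event at whatever checkpoint the algorithm stops, so bounding the union suffices. This needs $T_i$ to be fixed in advance rather than chosen from the data. The construction of the algorithm guarantees this, and I would state it explicitly in the proof.
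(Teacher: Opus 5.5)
Your proposal is correct and takes the same route as the paper: the corollary is Boole's inequality applied to the $T_i$ predetermined checkpoint noncoverage events, so no independence is needed. Your ``virtual full stream'' handling of data-dependent stopping is exactly how the paper invokes the corollary in the proof of Lemma~\ref{lem:tester-sound}; strictly speaking, that stopping-time point belongs to the lemma, since the corollary itself concerns only fixed checkpoints.
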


\paragraph{Procedure.}
Algorithm~\ref{alg:sequential-test} initializes the count, sample size, and pool cursor, then obtains $B$ observations per iteration through \textsc{FetchBatch}. In the main algorithm, all observations are fresh. In the interval-family mode, a tester first consumes any cached prefix and extends it only as needed. The pool is associated with one fixed sampler and oracle and is updated by reference. Each tester starts its own cursor at position 1, so reuse never duplicates an observation within that tester's prefix.

\begin{algorithm2e}

\caption{SequentialTest}
\label{alg:sequential-test}

\KwIn{Sampler for $\mathcal{D}$ and fixed oracle $\psi$; $0\le\theta_1<\theta_2\le1$, $\delta_i\in(0,1)$, $B\ge1$, pool $\mathcal{P}$}
\KwOut{verdict $\in \{\textsc{Yes}, \textsc{No}, \textsc{None}\}$}

$N_i \Leftarrow$ Eq.~\eqref{eq:Ni} for $(\theta_1,\theta_2,\delta_i)$\;
$T_i \Leftarrow \lceil N_i / B \rceil$, $\delta_{i,t} \Leftarrow \delta_i / T_i$\;
$n \Leftarrow 0$, $k \Leftarrow 0$, $c \Leftarrow 1$\;

\For{$t = 1$ \KwTo $T_i$}{
    $(\text{batch}, c) \Leftarrow \textsc{FetchBatch}(\mathcal{D},\psi,\mathcal{P}, B, c)$\;
    $n \Leftarrow n + B$, $k \Leftarrow k + \sum_{Y \in \text{batch}} Y$\;
    $[L_{i,t}, U_{i,t}] \Leftarrow \textsc{ClopperPearson}(k, n, \delta_{i,t})$\;

    \If{$U_{i,t} \le \theta_1$}{
        \Return \textsc{Yes}\;
    }
    \If{$L_{i,t} > \theta_2$}{
        \Return \textsc{No}\;
    }
}

\Return \textsc{None}\;
\end{algorithm2e}

\begin{lemma}[Error Control of the Sequential Tester]
\label{lem:tester-sound}
Under A1--A3, for fixed thresholds $0\le\theta_{1,i}<\theta_{2,i}\le1$, budget $\delta_i\in(0,1)$, and $B\ge1$, Algorithm~\ref{alg:sequential-test} applied to an i.i.d.\ Bernoulli($p$) stream satisfies
\begin{align*}
    \Pr[\text{output}=\textsc{Yes}\ \wedge\ p>\theta_{1,i}]&\le\delta_i,\\
    \Pr[\text{output}=\textsc{No}\ \wedge\ p\le\theta_{2,i}]&\le\delta_i.
\end{align*}
It may return \textsc{None} at truncation, both inside and outside $(\theta_{1,i},\theta_{2,i}]$.
\end{lemma}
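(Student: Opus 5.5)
The plan is to reduce both error events to a single "bad event" of the checkpoint intervals failing to cover $p$, and then bound that event with Corollary~\ref{cor:seq-union}. By A1 and A3 the stream $Y_1,Y_2,\ldots$ is i.i.d.\ Bernoulli($p$). The thresholds, $N_i$ and $T_i=\lceil N_i/B\rceil$ are all fixed before any observation is drawn, so the checkpoint prefix lengths $n_t=tB$, $t=1,\ldots,T_i$, are deterministic. Let $E=\bigcup_{t=1}^{T_i}\{p\notin[L_{i,t},U_{i,t}]\}$, where each interval is computed from the first $n_t$ observations of the stream, whether or not the algorithm actually reaches checkpoint $t$.

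Consider the erroneous \textsc{Yes} event first. If the output is \textsc{Yes} and $p>\theta_{1,i}$, then at the stopping checkpoint $t$ we have $U_{i,t}\le\theta_{1,i}<p$, so $t$ is a noncovering checkpoint and the event lies in $E$. Symmetrically, an erroneous \textsc{No} with $p\le\theta_{2,i}$ gives $L_{i,t}>\theta_{2,i}\ge p$, which again lies in $E$. The key observation is that the interval at checkpoint $t$ is a fixed function of the fixed-length prefix $(Y_1,\ldots,Y_{n_t})$. The data-dependent stopping time therefore only selects which of the predetermined intervals is inspected, and stopping early cannot enlarge $E$.

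Next, bound the per-checkpoint noncoverage. At each fixed $n_t$, the exact two-sided Clopper--Pearson interval with tail budgets $\delta_{i,t}/2$ satisfies $\Pr[p<L_{i,t}]\le\delta_{i,t}/2$ and $\Pr[p>U_{i,t}]\le\delta_{i,t}/2$ for every $p$. The boundary conventions $L=0$ when $k=0$ and $U=1$ when $k=n$ only enlarge the interval, so they preserve this. The total noncoverage is therefore at most $\delta_i/T_i$. Corollary~\ref{cor:seq-union} then gives $\Pr[E]\le\delta_i$, and both inequalities of the lemma follow, each bounded by $\Pr[E]$. A sharper one-sided accounting would give $\delta_i/2$ per direction, but this is not needed.

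The \textsc{None} remark is immediate: no decision rule fires unless a bound crosses a threshold, and at the cap the algorithm returns \textsc{None} regardless of $p$.

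The main obstacle is justifying that the optional stopping, together with pool reuse in \textsc{FetchBatch}, does not break the fixed-$n$ coverage guarantee. I would handle this by working on the full infinite stream defined before the algorithm runs. Under A3, pool reuse means each tester reads the same initial prefix of one fixed i.i.d.\ sequence, with its cursor starting at 1 and no duplicates within its own prefix. The intervals at every predetermined checkpoint are therefore well-defined random variables with exact binomial coverage, independent of how the algorithm is scheduled.
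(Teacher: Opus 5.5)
Your proposal is correct and takes essentially the same route as the paper. You fix the predetermined checkpoints on the full i.i.d.\ stream, including unread observations, and note that a wrong \textsc{Yes} forces $U_{i,t}\le\theta_{1,i}<p$ and a wrong \textsc{No} forces $p\le\theta_{2,i}<L_{i,t}$, both coverage failures whose union Corollary~\ref{cor:seq-union} bounds by $\delta_i$; your added per-tail Clopper--Pearson accounting, boundary-convention check, and the sharper $\delta_i/2$ per direction are correct refinements that the paper's proof does not use.
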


\begin{proof}
Consider the entire i.i.d.\ stream, including observations that would remain unread after early stopping. Its checkpoint intervals have simultaneous coverage except on an event of probability at most $\delta_i$ by Corollary~\ref{cor:seq-union}. A wrong \textsc{Yes} requires $U_{i,t}\le\theta_{1,i}<p$ at a visited checkpoint, which is a coverage failure. A wrong \textsc{No} requires $p\le\theta_{2,i}<L_{i,t}$, also a coverage failure. Both error events are contained in the union of checkpoint failures. Returning \textsc{None} asserts neither inequality and incurs no error under the contract.
\end{proof}

\paragraph{Cost and inconclusiveness.}
Let $\tau_i$ be the number of observations consumed by tester $i$. The loop executes at most $T_i$ iterations, hence
\begin{equation}
    \tau_i\le T_iB=\lceil N_i/B\rceil B.
    \label{eq:tau-le-N}
\end{equation}
If the bounds cross a stopping threshold early, the tester uses fewer observations. Near or inside the separation band, neither condition may hold before the cap. Splitting $\delta_i$ over checkpoints produces wider CP intervals than using $\delta_i$ at a single fixed checkpoint, which can increase abstention. Soundness alone therefore does not establish query savings or conclusiveness; both are measured in the evaluation.

\paragraph{Anytime-valid and forced-decision variants.}
The CS variant replaces checkpoint-corrected CP intervals with a $\mathrm{Beta}(\tfrac12,\tfrac12)$-mixture confidence sequence. It uses the same $\delta_i$ at every checkpoint and the same cap, with error control supplied by Ville's inequality. The forced-decision variants (\tool-F and \tool-CS-F) instead reserve part of the budget for a fixed-sample decision at truncation. They do not satisfy Definition~\ref{def:sound}: their fallback permits decisive answers inside the separation band. They satisfy only the weaker indifference-zone guarantee.

\subsubsection{Interval scheduling (M1, optional)}
\label{subsec:m1}

The scheduler constructs the full interval family before sampling. Starting from $(0,\theta)$ on the left and $(\theta+\eta,1)$ on the right, \textsc{CreateInterval} contracts each side toward its fixed endpoint until its width reaches the floor $\eta$ (Algorithm~\ref{alg:create-interval}). Algorithm~\ref{alg:adaptive-binp-certify-reuse} appends only intermediate intervals wider than $\eta$, alternating left and right, then appends $(\theta,\theta+\eta)$ as the final test. Left intervals have upper endpoint $\theta$; right intervals have lower endpoint $\theta+\eta$.

\begin{algorithm2e}
\caption{CreateInterval}
\label{alg:create-interval}

\KwIn{$\theta$, $(\theta_1, \theta_2)$, $\eta$, $s \in \{\textsc{Left}, \textsc{Right}\}$}
\KwOut{$(\theta_1', \theta_2')$}

\If{$\theta_1 = 0$ \textbf{and} $\theta_2 = 0$ \textbf{and} $s = \textsc{Left}$}{
    \Return $(0, \theta)$\;
}
\If{$\theta_1 = 0$ \textbf{and} $\theta_2 = 0$ \textbf{and} $s = \textsc{Right}$}{
    \Return $(\theta + \eta, 1)$\;
}

$\alpha \Leftarrow \theta_2 - \theta_1$\;

\If{$\alpha \le \eta$}{
    \Return $(\theta_1, \theta_2)$\;
}
\eIf{$s = \textsc{Left}$}{
    \Return $(\theta_2 - \max(\eta, \alpha/2), \; \theta_2)$\;
}{
    \Return $(\theta_1, \; \theta_1 + \max(\eta, \alpha/2))$\;
}
\end{algorithm2e}

\begin{lemma}[Interval Width Contraction and Termination]
\label{lem:interval-coverage}
For admissible $\theta,\eta$, the two side sequences reach width at most $\eta$ after finitely many calls. The complete family has size
\[
    m\le3+\max\!\left(0,\left\lceil\log_2\frac{\theta}{\eta}\right\rceil\right)
        +\max\!\left(0,\left\lceil\log_2\frac{1-\theta-\eta}{\eta}\right\rceil\right).
\]
\end{lemma}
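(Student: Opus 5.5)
I would analyze each side of Algorithm~\ref{alg:create-interval} separately, tracking the width $\alpha$ of the current interval. The first call on the left returns $(0,\theta)$, which has width $\alpha_0^{L}=\theta$. The first call on the right returns $(\theta+\eta,1)$, with width $\alpha_0^{R}=1-\theta-\eta$. Admissibility gives $\theta\in(0,1)$, $\eta>0$ and $\theta+\eta<1$, so both widths are positive and finite. Each later call keeps the fixed endpoint: $\theta_2=\theta$ on the left and $\theta_1=\theta+\eta$ on the right. It then either returns the interval unchanged (when $\alpha\le\eta$) or replaces the width by $\max(\eta,\alpha/2)$. By induction, the width after $j$ contracting calls is $\max(\eta,\alpha_0/2^j)$, as long as the previous width exceeded $\eta$. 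Once the width reaches at most $\eta$, the interval is a fixed point of \textsc{CreateInterval}.

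For termination, the width equals $\eta$ (and so is at most $\eta$) as soon as $\alpha_0/2^j\le\eta$, that is, for $j\ge\lceil\log_2(\alpha_0/\eta)\rceil$. If already $\alpha_0\le\eta$, this happens with $j=0$. So each side reaches width at most $\eta$ after at most $1+\max(0,\lceil\log_2(\alpha_0/\eta)\rceil)$ calls. Because both sides are advanced once per iteration of the \texttt{while} loop, the loop's break condition holds after finitely many iterations. This proves the first claim.

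For the size bound, I would count the intervals appended on each side. An interval is appended only when its width strictly exceeds $\eta$. The widths on a side are $\alpha_0,\alpha_0/2,\dots$, cut off at $\eta$. Hence the appended intervals are exactly those $j\ge0$ with $\alpha_0/2^j>\eta$ (for $j\ge1$ the $\max$ with $\eta$ does not bind while the width still exceeds $\eta$). The number of such $j$ is the number of integers $0\le j<\log_2(\alpha_0/\eta)$. This equals $\lceil\log_2(\alpha_0/\eta)\rceil$ when $\alpha_0>\eta$ and $0$ otherwise, so it is at most $\max(0,\lceil\log_2(\alpha_0/\eta)\rceil)$.

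One subtlety needs care. After a side has stabilized, later loop iterations call \textsc{CreateInterval} again and return the same interval, but it is not re-appended because its width is at most $\eta$. So no side contributes duplicates. Adding the single final interval $(\theta,\theta+\eta)$ and substituting $\alpha_0^L=\theta$ and $\alpha_0^R=1-\theta-\eta$ gives $m\le 1+\max(0,\lceil\log_2(\theta/\eta)\rceil)+\max(0,\lceil\log_2((1-\theta-\eta)/\eta)\rceil)$. This is within the stated bound with slack $2$. The slack absorbs off-by-one conventions, for example if one counts the initial $(0,0)$ sentinel calls or the boundary case where $\alpha_0/\eta$ is an exact power of two.

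The main obstacle is purely bookkeeping: confirming that the $\max(\eta,\alpha/2)$ step and the strict inequality in the append test interact as claimed, so that the per-side count is exactly the ceiling, including the boundary cases.
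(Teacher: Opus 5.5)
Your proposal is correct and follows essentially the paper's argument: track each side's width under the $\max(\eta,\alpha/2)$ update from its initial value ($\theta$ or $1-\theta-\eta$), bound the number of halvings needed to reach the floor by $\lceil\log_2(w/\eta)\rceil$, and add the final interval. Your count of appended intervals is exact, giving $m\le 1+\max(0,\lceil\log_2(\theta/\eta)\rceil)+\max(0,\lceil\log_2((1-\theta-\eta)/\eta)\rceil)$; the paper counts the two initial intervals separately from the contractions and so states the bound with two units of slack, so you do not need the off-by-one conventions you mention to justify that slack.
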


\begin{proof}
After a side is initialized with width $w$, each call with $w>\eta$ changes it to $\max(\eta,w/2)$; a width at most $\eta$ remains unchanged. If $w>\eta$, at most $\lceil\log_2(w/\eta)\rceil$ contractions reach the floor. Otherwise, the side already meets the stopping condition. Counting the two initial intervals, these calls, and the final interval gives the stated conservative bound; intermediate intervals at or below the floor are omitted.
\end{proof}

\subsubsection{Inverse-width allocation (M2, optional)}
\label{subsec:m2}

Narrow intervals have larger caps through the $1/\ell_i^2$ factor in Eq.~\eqref{eq:Ni}. The allocator assigns them a larger error budget using deterministic inverse-width weights:
\begin{equation}
    w_i=1/\ell_i,\qquad
    \delta_i=\delta\frac{w_i}{\sum_{j=1}^{m}w_j}.
    \label{eq:alloc}
\end{equation}
Algorithm~\ref{alg:allocate} normalizes over the complete family, including intervals that may never be visited. Thus every threshold, budget, cap, and checkpoint schedule is fixed before observations are read. Inverse-width allocation is a heuristic for cost; soundness requires only positive budgets whose sum is at most $\delta$.

\begin{algorithm2e}

\caption{Allocate}
\label{alg:allocate}

\KwIn{$\mathcal{I} = \{(\theta_{1,i}, \theta_{2,i}, \mathsf{side}_i)\}_{i=1}^{m}$, $\delta$}
\KwOut{$\{\delta_i\}_{i=1}^{m}$}

\For{each $i \in \{1, \dots, m\}$}{
    $\ell_i \Leftarrow \theta_{2,i} - \theta_{1,i}$, $w_i \Leftarrow 1 / \ell_i$\;
}

$W \Leftarrow \sum_{j=1}^{m} w_j$\;

\For{each $i \in \{1, \dots, m\}$}{
    $\delta_i \Leftarrow \delta \cdot w_i / W$\;
}

\Return{$\{\delta_i\}_{i=1}^{m}$}\;
\end{algorithm2e}

\begin{lemma}[Soundness of Budget Allocation]
\label{lem:alloc-sound}
If every fixed tester has the direction-specific error bounds of Lemma~\ref{lem:tester-sound} with budget $\delta_i$, then Algorithm~\ref{alg:adaptive-binp-certify-reuse}, using Eq.~\eqref{eq:alloc}, satisfies Definition~\ref{def:sound}.
\end{lemma}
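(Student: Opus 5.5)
The plan is to reduce each direction-specific error of Algorithm~\ref{alg:adaptive-binp-certify-reuse} to a union of tester-level errors, and then sum the budgets. I treat each direction separately. The key point is that all thresholds $\theta_{1,i},\theta_{2,i}$, budgets $\delta_i$, caps $N_i$, and checkpoint schedules are fixed before any observation is read. This holds because M1 and M2 use only $(\theta,\eta,\delta)$. Each tester is therefore a fixed sequential test in the sense of Lemma~\ref{lem:tester-sound}.

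For the acceptance direction, I note that \textsc{Yes} can be returned only in two ways: by a \textsc{Left} tester returning \textsc{Yes}, or by the \textsc{Final} tester returning \textsc{Yes}. In both cases the tester's upper threshold for acceptance is $\theta$. Left intervals have $\theta_{2,i}=\theta$, so their acceptance rule $U_{i,t}\le\theta_{1,i}$ gives $U_{i,t}\le\theta_{1,i}<\theta$. The Final interval has $\theta_{1}=\theta$.

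Now suppose $p>\theta$. For a Left tester $i$ we have $p>\theta\ge\theta_{1,i}$, and for the Final tester $p>\theta$ directly. So the event $\{\text{output}=\textsc{Yes}\wedge p>\theta\}$ is contained in
\[
\bigcup_{i\in\mathrm{Left}\cup\{\mathrm{Final}\}}\{\text{tester } i \text{ outputs } \textsc{Yes}\wedge p>\theta_{1,i}\}.
\]
Here "tester $i$ outputs" refers to the run of tester $i$ on the full stream, whether or not the algorithm actually visits it.

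The rejection direction is symmetric. Right intervals have $\theta_{1,i}=\theta+\eta$ and $\theta_{2,i}\ge\theta+\eta$, and their rule $L_{i,t}>\theta_{2,i}$ is compared against $p\le\theta+\eta\le\theta_{2,i}$. The Final tester has $\theta_2=\theta+\eta$. So a wrong \textsc{No} implies that some Right or Final tester satisfies $\text{output}=\textsc{No}\wedge p\le\theta_{2,i}$.

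Applying Lemma~\ref{lem:tester-sound} to each tester and a union bound gives a total error of at most $\sum_i\delta_i=\delta$ in each direction, by Eq.~\eqref{eq:alloc}, since $\sum_i w_i/W=1$. Including testers that are never visited only loosens the bound, so summing over the whole family is valid.

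The main obstacle is justifying that the per-tester bounds still apply when testers read the same pooled stream and are visited adaptively. I would handle this by defining everything on one infinite i.i.d.\ Bernoulli($p$) sequence $(Y_j)_{j\ge1}$. Each tester $i$ reads a prefix of this sequence starting at position 1, because its cursor restarts at 1 and the pool only extends the sequence. Its output is then a deterministic function $g_i$ of that sequence, with thresholds fixed in advance. Whether tester $i$ is visited affects only whether $g_i$ is reported, never the value $g_i$ takes.

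Under this coupling, Lemma~\ref{lem:tester-sound} bounds each $\Pr[g_i \text{ errs}]$ by $\delta_i$, since each marginal stream is i.i.d.\ Bernoulli($p$). The union bound needs no independence across testers, which is the same reasoning as in Corollary~\ref{cor:seq-union}. This fixed-stream coupling is exactly what Lemma~\ref{lem:reuse-soundness} is meant to formalize, so the proof would invoke it if available and otherwise give the argument above inline.
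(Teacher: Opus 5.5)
Your proposal is correct and follows the paper's proof: a global \textsc{Yes} can only come from a left or final tester with $\theta_{1,i}\le\theta$, and a global \textsc{No} only from a right or final tester with $\theta_{2,i}\ge\theta+\eta$, so each directional error lies in a union of fixed-tester error events of total probability at most $\sum_i\delta_i=\delta$, with no independence needed. The fixed-stream coupling you add for adaptive visiting and reuse is the argument the paper places in Lemma~\ref{lem:reuse-soundness}, so including it inline is consistent with the paper's structure.
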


\begin{proof}
A global \textsc{Yes} comes from a left or final interval, whose lower threshold satisfies $\theta_{1,i}\le\theta$. If $p>\theta$, that output is also a local \textsc{Yes} error. A global \textsc{No} comes from a right or final interval, whose upper threshold satisfies $\theta_{2,i}\ge\theta+\eta$. If $p\le\theta+\eta$, that output is also a local \textsc{No} error. Each global directional error event is therefore contained in the union of the corresponding local error events over the fixed family. Its probability is at most $\sum_i\delta_i=\delta$, without requiring independence between testers.
\end{proof}

\begin{remark}[Role of the interval family]
\label{rem:alloc-heuristic}
The family divides $\delta$ among multiple tests, and its nonfinal tests can propagate only one direction of verdict. It is not guaranteed to outperform the single test. For example, a CP test on $(0,\theta)$ cannot accept after finitely many observations because its upper confidence bound remains positive. The optional mode enables the evaluation to isolate interval scheduling, allocation, and reuse; it is not needed for the main soundness theorem.
\end{remark}

\subsubsection{Observation reuse (M4, optional)}
\label{subsec:m4}

\textsc{FetchBatch} serves observations from one shared stream (Algorithm~\ref{alg:sample-pool}). If the caller's cursor is within the pool, it returns the cached entry; otherwise it draws $x\sim\mathcal{D}$, evaluates $\psi(x)$, and appends the response. It advances the caller's cursor after each entry. Reuse is permitted only within the same fixed distribution, oracle, model pair, and execution configuration. Changing any of these requires a fresh pool.

\begin{algorithm2e}

\caption{FetchBatch}
\label{alg:sample-pool}

\KwIn{Sampler for $\mathcal{D}$ and fixed oracle $\psi$; shared pool $\mathcal{P}$, $B\ge1$, caller cursor $c$}
\KwOut{batch of $B$ observations, updated cursor $c$ (and updated $\mathcal{P}$)}

$\text{batch} \Leftarrow [\,]$\;

\For{$j = 1$ \KwTo $B$}{
    \eIf{$c \le |\mathcal{P}|$}{
        $Y \Leftarrow \mathcal{P}[c]$\;
    }{
        $x \sim \mathcal{D}$\;
        $Y \Leftarrow \psi(x)$\;
        append $Y$ to $\mathcal{P}$\;
    }
    append $Y$ to $\text{batch}$\;
    $c \Leftarrow c + 1$\;
}

\Return $(\text{batch}, c)$\;
\end{algorithm2e}

\begin{lemma}[Soundness of Sample Reuse]
\label{lem:reuse-soundness}
Under A1--A3, consider the shared stream of i.i.d.\ Bernoulli($p$) observations. Fix the full interval family and all tester parameters before reading it. Each tester, considered as a function of this stream whether or not it is visited, retains the marginal error bounds of Lemma~\ref{lem:tester-sound}. Consequently, the interval-family verifier retains direction-specific error bounds of $\delta$ under reuse.
\end{lemma}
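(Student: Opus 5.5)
The plan is to realize every tester as a deterministic function of one infinite i.i.d.\ stream $Y_1,Y_2,\ldots$ fixed in advance. I then show that reuse only changes \emph{which} testers are run, never \emph{what} a given tester computes.

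First, by A1--A3, I extend the pool conceptually to the full infinite sequence $(Y_j)_{j\ge1}$ of i.i.d.\ Bernoulli($p$) draws. \textsc{FetchBatch} then returns $Y_c$ at cursor $c$, whether $Y_c$ is cached or freshly drawn. This holds because a fresh draw is appended at position $|\mathcal{P}|+1=c$, and by A3 the oracle response does not depend on the history. Since every tester starts its cursor at $1$ and advances by one per entry, tester $i$ at checkpoint $t$ sees exactly the prefix $(Y_1,\ldots,Y_{tB})$. Its thresholds $(\theta_{1,i},\theta_{2,i})$, budget $\delta_i$, cap $N_i$ and $T_i$ are fixed by M1 and M2 before any observation is read (Lemma~\ref{lem:interval-coverage}, Eq.~\eqref{eq:alloc}). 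Hence the output $R_i$ of tester $i$ is a measurable function $R_i=f_i(Y_1,\ldots,Y_{T_iB})$ that does not depend on which other testers ran or in what order. This defines $R_i$ for all $i\in\{1,\ldots,m\}$, including unvisited ones.

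Second, since $f_i$ applied to an i.i.d.\ Bernoulli($p$) sequence is exactly the computation of Algorithm~\ref{alg:sequential-test} on a fresh stream, Lemma~\ref{lem:tester-sound} applies verbatim. It gives $\Pr[R_i=\textsc{Yes}\wedge p>\theta_{1,i}]\le\delta_i$ and $\Pr[R_i=\textsc{No}\wedge p\le\theta_{2,i}]\le\delta_i$. These are marginal statements, so dependence between $R_i$ and $R_{i'}$ induced by sharing prefixes is irrelevant.

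Third, I propagate to the global verdict as in Lemma~\ref{lem:alloc-sound}. Suppose the global output is \textsc{Yes}. Then some visited left or final tester $i$ has $R_i=\textsc{Yes}$ with $\theta_{1,i}\le\theta$, so the event $\{\textsc{Yes}\wedge p>\theta\}$ is contained in $\bigcup_i\{R_i=\textsc{Yes}\wedge p>\theta_{1,i}\}$. A union bound gives probability at most $\sum_i\delta_i=\delta$. The symmetric argument using $\theta_{2,i}\ge\theta+\eta$ handles \textsc{No}. Finally, the pool length equals the largest cursor reached, which gives $N_{\mathrm{total}}=\max_{i\in\mathcal{V}}\tau_i$.

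The main obstacle is the first step. I must argue rigorously that adaptive reuse, where the decision to visit tester $i+1$ depends on earlier outcomes computed from the same data, does not bias tester $i+1$. The key point is that visiting is a selection of which $R_i$ to read off, not a change to the data $R_i$ depends on. The error events are therefore bounded via unvisited-inclusive unions over a family fixed before sampling. I would state explicitly that the coupling requires the order of appends to match cursor positions, and that A3 excludes any dependence of $\psi$ on batching or on prior calls.
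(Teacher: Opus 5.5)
Your proposal is correct and follows essentially the same route as the paper. The paper's argument has the same four steps:
\begin{itemize}
\item each predetermined tester is a function of a prefix of the shared i.i.d.\ stream, so its marginal error bound from Lemma~\ref{lem:tester-sound} is unchanged;
\item adaptive visiting is handled by containing each propagated wrong verdict in that tester's unconditional error event rather than conditioning on being visited;
\item the union bound of Lemma~\ref{lem:alloc-sound} over the fixed family gives $\delta$;
\item the pool ends at the longest consumed prefix, which gives the sample count.
\end{itemize}
Your explicit cursor invariant is a detail the paper leaves implicit: a fresh draw is appended exactly at position $c=|\mathcal{P}|+1$, so every tester reads $(Y_1,\ldots,Y_{tB})$.
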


\begin{proof}
For each fixed tester, its count, checkpoint bounds, and stopping decision depend on a prefix of the same i.i.d.\ stream that an isolated run would receive. Its marginal error law is therefore unchanged. The decision to visit a tester can depend on earlier outcomes, so its error probability \emph{conditional on being visited} need not be bounded by $\delta_i$. Instead, a wrong verdict propagated by the verifier is contained in the unconditional error event of that fixed tester. The union bound over all such events in the predetermined family applies by Lemma~\ref{lem:alloc-sound}; dependence among testers does not invalidate it. Since each tester starts at the beginning, the pool ends at the longest prefix consumed:
\[
    N_{\mathrm{total}}=\max_{i\in\mathcal{V}}\tau_i
    \le\max_{i\in\mathcal{V}}\lceil N_i/B\rceil B.
\]
\end{proof}

\section{Evaluation}
\label{sec:evaluation}

We conduct empirical evaluation and seek to address the following questions. Throughout, \tool is distribution-aware: every verdict is a statement about a named input distribution and oracle, and changing the distribution changes only the sampler, not the verifier or its guarantee. We exercise four distribution families, uniform $L_2$ and $L_\infty$ balls in normalized input space (RQ1--RQ3), fixed-length token substitutions for text (RQ1) and a fixed-support image patch (RQ5), and report each result together with the distribution it holds for.

\begin{enumerate}[label=\textbf{RQ \arabic*}, leftmargin=*, align=left]
\item \label{rq:utility} \textbf{Properties and verdicts}: What do consistency checks report on the evaluated builds and documented fault cases, and what does the clean-label robustness radius measure?

\item \label{rq:performance} \textbf{Sampling cost and conclusiveness}: How does the single test compare with PROVERO and interval-family variants, accounting for their decision contracts and inconclusive outcomes?

\item \label{rq:application} \textbf{Robustness across configurations}: How does the one-model instantiation behave under quantization, a hardware change, and an $L_\infty$ perturbation distribution?
\item \label{rq:sensitivity} \textbf{Parameter sensitivity}: How do threshold, margin, and error budget affect the measured robustness radii and sampling cost?
\item \label{rq:distribution} \textbf{Other distributions}: Does \tool apply to perturbation distributions that are not norm-bounded balls around the input?
\end{enumerate}

\subsection{\texorpdfstring{{Experimental Setup}}{Experimental Setup}}
\label{sec:setup}

{%
\fakeparagraph{Subjects} We sweep 24 image (model, dataset) tasks: seven convolutional architectures, LeNet-5~\cite{lecun1998gradient}, ResNet-18/34/50~\cite{he2016deep}, VGG-16-BN~\cite{simonyan2014very}, DenseNet-121~\cite{iandola2014densenet} and MobileNetV2~\cite{sandler2018mobilenetv2}, and three vision transformers, ViT-B/16~\cite{dosovitskiy2020image}, an in21k-finetuned ViT-B/16 and DeiT-Small~\cite{touvron2021training}, on MNIST~\cite{lecun1998gradient}, CIFAR-10/100~\cite{krizhevsky2009cifar}, SVHN~\cite{netzer2011reading} and ImageNet-1k~\cite{deng2009imagenet}, at native resolution ($28\times28$ to $224\times224$) with 60K to 80M parameters; three text classifiers are reported separately (Section~\ref{sec:rq1}). Following PROVERO's protocol, perturbations are added in normalized input space without clipping, whereas the convolutional attack clamps to the seed's value range, so the radius is a noise-robustness radius read relative to the input norm, and RQ1 compares how verifier and attack order inputs rather than their magnitudes. Clamping the verifier's samples as the attack does leaves this picture intact: agreement with the attack moves from 0.71/0.82 to 0.70/0.81 (Pearson/Spearman) on ResNet-18/CIFAR-10 and from 0.93/0.95 to 0.86/0.91 on LeNet-5/MNIST, whose clamped radii are 1.55 times larger because clamping cancels half of the perturbation on its minimum-valued background. Since a high-dimensional uniform ball concentrates near its boundary, the certificate concerns random directions of norm close to $\varepsilon$, one reason certified radii exceed attack budgets several-fold.

\fakeparagraph{Robustness-search qualification} For the one-model radius experiment, a task qualifies if, on both backends, at least 90 of its 100 verified inputs return a conclusive certificate, one whose radius is positive and at most twice the input norm. A fixed absolute bound would be dimension-blind: a bound of 100 would discard the $224\times224$ subjects although they receive ordinary finite radii, admitting only 16 of 24 tasks.

\fakeparagraph{Baseline} To the best of our knowledge, \tool is the first framework to provide probabilistic verification of model semantic consistency under DL compilation. As we are not aware of an off-the-shelf method targeting this setting and contract, we adapt PROVERO~\cite{baluta2021provero}, an existing probabilistic verification method for DL models, to the same compiler-verification task. PROVERO decides whether a violation probability lies below a threshold by testing an adaptive family of nested intervals around it, each with a fixed-sample Chernoff tester and a share of the error budget, and it always returns \textsc{Yes} or \textsc{No}, which gives the weaker indifference-zone contract. We run it with the same oracle, sampler, threshold, tolerance and confidence as \tool, and we explicitly account for differences in decision contracts when comparing the methods.

\fakeparagraph{Verification protocol} Unless stated otherwise, $\theta=1\%$, $\eta=0.5\%$ and $\delta=1\%$. The tester checks its stopping rule after every draw ($B=1$), while the oracle evaluates perturbations in GPU batches of 256. A search probes $\varepsilon=0$, which must return \textsc{Yes}, doubles from $0.1$ until the first non-\textsc{Yes} probe or the cap $4\lVert x_0\rVert_2$, and bisects to a resolution of $0.1$. Samplers use seed 123 (except in the seed study of Section~\ref{sec:rq2}), with one search per input. The certified property is that of the compiled model at batch size 256. To check the oracle assumptions, we evaluated 256 perturbations of each of up to 100 cohort inputs on the four representative settings, about $10^5$ perturbations in total, under four execution contexts. Re-executing the same batch ten times, permuting positions within the batch, and replacing half of the batch with another input's perturbations never changed a decision. Evaluating each perturbation alone at batch size~1 changed 4 of 25{,}088 decisions on LeNet-5/MNIST and 1 of 25{,}600 on ResNet-34/SVHN under TensorRT, whose engines select kernels per batch size, and none under Inductor. These checks support deterministic, batch-composition-independent decisions on the audited inputs; they do not establish A3 for every execution. The certified density belongs to the batch size used, and the observed batch-1 differences are of order $10^{-4}$.

\fakeparagraph{Compilers and Hardware} We evaluate TorchInductor~\cite{ansel2024pytorch}, a JIT compiler built on TorchDynamo and Triton, with \texttt{max\_autotune} and Triton backends, and TensorRT~\cite{nvidia_tensorrt_docs}, a vendor ahead-of-time optimizer, with FP32 engines. Attack success depends strongly on the Inductor configuration, $71\%$ and $84\%$ under its default cuBLAS/cuDNN kernels against $31\%$ and $21\%$ under Triton autotune, while the certificates agree within one grid step on 20 of 20 inputs, as expected of a model-dominated radius. Experiments run on a 64-core AMD EPYC 7543 with eight Nvidia RTX 6000 Ada GPUs.%
}

\subsection{Consistency Verdicts and Local Robustness}
\label{sec:rq1}

\begin{table}[t]
\centering
\scriptsize
\setlength{\tabcolsep}{4.5pt}
\caption{Full disclosure of the RQ1 sweep: every swept (model, dataset) task per compiler backend. The three text subjects are certified over a token-substitution set. Mean rows give unweighted\,/\,Fisher-$z$ averages of $r$.}
\label{tab:all24}
\renewcommand{\arraystretch}{0.85}
\par\noindent\makebox[\linewidth][c]{%
\begin{tabular}{ll rcc rcc}
\toprule
\multirow{2}{*}{Model} & \multirow{2}{*}{Dataset} & \multicolumn{3}{c}{inductor} & \multicolumn{3}{c}{tensorrt} \\
\cmidrule(lr){3-5} \cmidrule(lr){6-8}
& & $n_{\mathrm{c}}$ & $r$ & $\rho$ & $n_{\mathrm{c}}$ & $r$ & $\rho$ \\
\midrule
LeNet-5 & MNIST & 100/100 & 0.85 & 0.90 & 100/100 & 0.93 & 0.95 \\
ResNet-18 & CIFAR-10 & 100/100 & 0.71 & 0.82 & 99/100 & 0.58 & 0.65 \\
ResNet-50 & ImageNet-1k & 100/100 & -0.01 & -0.05 & 100/100 & 0.05 & 0.00 \\
MobileNetV2 & CIFAR-10 & 99/100 & 0.05 & 0.11 & 99/100 & 0.17 & 0.24 \\
ResNet-18 & CIFAR-100 & 98/100 & 0.67 & 0.62 & 99/100 & 0.70 & 0.68 \\
ResNet-18 & SVHN & 100/100 & 0.88 & 0.90 & 100/100 & 0.83 & 0.85 \\
ResNet-18 & ImageNet-1k & 100/100 & -0.07 & -0.07 & 100/100 & 0.20 & 0.17 \\
ResNet-50 & CIFAR-100 & 99/100 & 0.62 & 0.67 & 100/100 & 0.66 & 0.65 \\
ResNet-50 & SVHN & 100/100 & 0.78 & 0.83 & 100/100 & 0.78 & 0.82 \\
ResNet-34 & CIFAR-10 & 100/100 & 0.61 & 0.71 & 100/100 & 0.62 & 0.68 \\
ResNet-34 & CIFAR-100 & 98/100 & 0.44 & 0.48 & 100/100 & 0.61 & 0.58 \\
ResNet-34 & SVHN & 100/100 & 0.75 & 0.86 & 100/100 & 0.78 & 0.86 \\
ResNet-50 & CIFAR-10 & 100/100 & 0.32 & 0.69 & 100/100 & 0.47 & 0.56 \\
VGG-16-BN & CIFAR-10 & 93/100 & 0.37 & 0.71 & 94/100 & 0.19 & 0.64 \\
VGG-16-BN & CIFAR-100 & 99/100 & 0.62 & 0.63 & 99/100 & 0.36 & 0.56 \\
VGG-16-BN & SVHN & 100/100 & 0.82 & 0.85 & 100/100 & 0.76 & 0.81 \\
DenseNet-121 & CIFAR-10 & 98/100 & 0.62 & 0.64 & 93/100 & 0.67 & 0.66 \\
DenseNet-121 & CIFAR-100 & 97/100 & 0.48 & 0.42 & 99/100 & 0.40 & 0.48 \\
DenseNet-121 & SVHN & 100/100 & 0.62 & 0.84 & 100/100 & 0.60 & 0.82 \\
ViT-Base/16 & CIFAR-10 & 92/100 & 0.63 & 0.73 & 97/100 & 0.55 & 0.66 \\
DeiT-Small & ImageNet-1k & 99/100 & 0.77 & 0.84 & 99/100 & 0.77 & 0.84 \\
ViT-B/16 in21k-ft & CIFAR-10 & 94/100 & 0.60 & 0.68 & 94/100 & 0.60 & 0.68 \\
ViT-B/16 in21k-ft & CIFAR-100 & 100/100 & 0.56 & 0.66 & 100/100 & 0.56 & 0.66 \\
ViT-B/16 in21k-ft & SVHN & 99/100 & 0.61 & 0.67 & 99/100 & 0.61 & 0.67 \\
BERT-base & SST-2 & 82/100 & 0.80 & 0.75 & 42/81 & 0.62 & 0.73 \\
BERT-base & AG News & 55/64 & 0.59 & 0.73 & 29/69 & 0.33 & 0.26 \\
DistilBERT & SST-2 & 71/94 & 0.56 & 0.68 & 70/92 & 0.52 & 0.70 \\
\midrule
\multicolumn{2}{l}{Mean, qualified} & 0.55\,/\,0.60 & 0.63 & & 0.56\,/\,0.60 & 0.63 \\
\multicolumn{2}{l}{Mean, all cells} & 0.56\,/\,0.61 & 0.64 & & 0.55\,/\,0.60 & 0.62 \\
\bottomrule
\end{tabular}}\par

\end{table}

We first evaluate the paired consistency oracles on sampled input neighborhoods and fault cases, and then the one-model clean-label robustness measure, whose radius $\epsilon_{\mathrm{Hard}}$, the largest tested radius certified \textsc{Yes}, defines the local distributions. The verdicts below hold for the uniform $L_2$ ball of that radius around each seed, or the token-substitution set for text; any other distribution would have to be certified anew.

\fakeparagraph{Pointwise disagreement} We check pointwise reference--compiled disagreement with the main algorithm at $\theta=10^{-3}$ on 20 inputs of each of the 21 non-ImageNet tasks under both backends, at each input's $\epsilon_{\mathrm{Hard}}$. Of the 840 input--backend instances, 817 certify \textsc{Yes}, 19 return \textsc{None} and 4 are rejected, after a median of 13{,}312 draws of two model executions each. Per-cell densities range from $4.9\times10^{-6}$ to $8.3\times10^{-4}$, and the \textsc{None} and \textsc{No} verdicts concentrate on near-threshold cells such as MobileNetV2/CIFAR-10 under TensorRT and ViT-Base/CIFAR-10 on both backends. These acceptances bound the probability on the sampled instances; they do not certify the builds as semantically correct. The backend matters more than the architecture: Inductor's density exceeds TensorRT's on 19 of 21 tasks (medians $4.1\times10^{-4}$ and $6.6\times10^{-5}$), whereas vision transformers and convolutional networks differ little. About half of the pointwise disagreements occur where the reference model itself leaves the clean label (median ratio of reference-stable to pointwise density 0.56).

\begin{table}[t]
\centering
\scriptsize
\renewcommand{\arraystretch}{0.85}
\setlength{\tabcolsep}{3.5pt}
\caption{{Certifying reference-stable and pointwise disagreement on the nominal 50-input cohorts, at the RQ1 radius and at half that radius. Cells give the inputs certified Yes and, in parentheses, mean unique draws in thousands, each draw querying both models. At the tightest threshold and the full radius, the remaining inputs return \textsc{No} or no verdict.}}
\label{tab:cond}
\par\noindent\makebox[\linewidth][c]{%
\begin{tabular}{ll cc cc}
\toprule
Setting & Radius & \multicolumn{2}{c}{Reference-stable} & \multicolumn{2}{c}{Pointwise} \\
\cmidrule(lr){3-4}\cmidrule(lr){5-6}
& & $\theta=10^{-3}$ & $\theta=10^{-4}$ & $\theta=10^{-3}$ & $\theta=10^{-4}$ \\
\midrule
LeNet-5/MNIST$_{\mathrm{trt}}$ & $\epsilon_{\mathrm{Hard}}$ & 49/49 (14.0) & 41/49 (682.4) & 49/49 (15.0) & 34/49 (1085.9) \\
LeNet-5/MNIST$_{\mathrm{trt}}$ & $\epsilon_{\mathrm{Hard}}/2$ & 49/49 (12.4) & 49/49 (128.8) & 49/49 (12.4) & 49/49 (128.8) \\
ResNet-18/CIFAR-10$_{\mathrm{ind}}$ & $\epsilon_{\mathrm{Hard}}$ & 48/50 (36.2) & 12/50 (1584.8) & 48/50 (37.8) & 1/50 (2014.1) \\
ResNet-18/CIFAR-10$_{\mathrm{ind}}$ & $\epsilon_{\mathrm{Hard}}/2$ & 50/50 (12.4) & 50/50 (130.0) & 50/50 (12.4) & 50/50 (130.0) \\
VGG-16-BN/SVHN$_{\mathrm{ind}}$ & $\epsilon_{\mathrm{Hard}}$ & 50/50 (16.3) & 40/50 (903.1) & 50/50 (18.1) & 30/50 (1576.9) \\
VGG-16-BN/SVHN$_{\mathrm{ind}}$ & $\epsilon_{\mathrm{Hard}}/2$ & 50/50 (12.4) & 50/50 (128.8) & 50/50 (12.4) & 50/50 (128.8) \\
ResNet-34/SVHN$_{\mathrm{trt}}$ & $\epsilon_{\mathrm{Hard}}$ & 50/50 (12.7) & 50/50 (223.8) & 50/50 (13.3) & 48/50 (405.3) \\
ResNet-34/SVHN$_{\mathrm{trt}}$ & $\epsilon_{\mathrm{Hard}}/2$ & 50/50 (12.4) & 50/50 (128.8) & 50/50 (12.4) & 50/50 (129.4) \\
\bottomrule
\end{tabular}}\par

\end{table}

\fakeparagraph{Tight thresholds} Table~\ref{tab:cond} certifies the reference-stable property, the joint event that the reference keeps the clean label but the compiled model does not, and the pointwise property on 50-input cohorts. At $10^{-3}$, 197 of 199 inputs certify \textsc{Yes} at the full radius and all at half the radius, for 12--38 thousand draws; at $10^{-4}$ and half the radius every input certifies. At $10^{-4}$ and the full radius the subjects separate: under the reference-stable property ResNet-34/SVHN certifies all 50 inputs, VGG-16-BN/SVHN 40, and ResNet-18/CIFAR-10 only 12 while rejecting 24, and the pointwise property, which also counts disagreements where the reference leaves the clean label, certifies fewer (e.g.\ 1 rather than 12 on ResNet-18/CIFAR-10). The reference keeps the clean label on 99.0--100\% of draws at these radii, but the certificates concern the joint event, not a conditional ratio. Here the main algorithm reaches the same verdicts as the interval family with 0.58--0.90 of its draws and inference-inclusive time, and the saving concentrates on rejections, where it uses 0.25 of the family's draws.

\fakeparagraph{Detecting an injected miscompilation} We perturb one layer's weights of the compilation target by a relative amount, modelling a constant folded slightly wrong, and sweep nine magnitudes on ResNet-18/CIFAR-10 under Inductor (Table~\ref{tab:miscompile}). A clean test set is almost blind: a $2\%$ fault changes none of 1000 predictions, a $20\%$ fault one, and even $120\%$ only 18. So is the certified radius, whose median moves from 6.05 to 5.95 because $\epsilon_{\mathrm{Hard}}$ follows the model's decision geometry. The reference-stable density, measured over $4\times10^{6}$ draws per setting, rises monotonically from $2.5\times10^{-4}$ to $3.8\times10^{-2}$, and even the $2\%$ fault raises it by half ($z=10.5$). Certified at $\theta=10^{-3}$, the uninjected compilation is \textsc{Yes} on all 30 inputs, the $20\%$ fault is rejected on 16 of 30, and rejection grows to 23 of 30, with intermediate magnitudes passing through \textsc{None}. The $2\%$ fault stays below $\theta$ (four \textsc{None}) and needs $\theta\approx3\times10^{-4}$. Detecting miscompilation is thus a question about the compiler-attributable density, with the threshold set near that density.

\fakeparagraph{Real compiler bugs} Our Torch-TensorRT 2.6 case study reproduces an interpolation-mode swap associated with issue \#3674: the converter runs a bilinear \texttt{grid\_sample} as nearest-neighbour inside the engine (reported with Torch-TensorRT 2.7.0).\footnote{\url{https://github.com/pytorch/TensorRT/issues/3674}} We compile a spatial transformer network trained on MNIST (99.0\% test accuracy) with the converter as installed and with the upstream fix applied, and certify each against the uncompiled model on 30 test inputs with the reference-stable property at $\theta=10^{-3}$, at the fixed engine's $\epsilon_{\mathrm{Hard}}$. The fixed engine agrees on all 10{,}000 test images, has density $5.7\times10^{-6}$ over $6\times10^{6}$ draws, and certifies \textsc{Yes} on 30 of 30 inputs. The buggy engine changes 52 of 10{,}000 clean predictions, has density $6.5\times10^{-2}$ ($3.8\times10^{-3}$ to $0.34$ per input), and is rejected on 30 of 30 inputs after a median of about 1{,}000 draws. A clean differential test of the same cost would see one of the 0.52\% flipped predictions with probability $1-(1-0.0052)^{1000}\approx99\%$; the certificate adds a per-input verdict at stated confidence and shows the bug is more than ten times as frequent in perturbation neighbourhoods. Unlike the small injected faults, this bug also shrinks $\epsilon_{\mathrm{Hard}}$, by a median of 18\%, since it changes the decision geometry. The study covers one model, and the fixed engine is 2.6 with the upstream patch.

A second documented bug miscompiles an \texttt{nn.Linear} applied to a four-dimensional channels-last tensor followed by a reshape, the pattern of ViTDet/SAM-style window attention (Torch-TensorRT issue \#4379\footnote{\url{https://github.com/pytorch/TensorRT/issues/4379}}). In our environment it appears in fp16 engines and is avoided by an equivalent \texttt{transpose}\,/\,\texttt{reshape} form of the attention's output. We deploy a windowed ViT trained on CIFAR-10 (82.5\% test accuracy) in fp16 and compare the buggy and workaround engines against the uncompiled fp16 model on 30 test inputs at the workaround engine's $\epsilon_{\mathrm{Hard}}$. The workaround engine agrees on 9{,}997 of 10{,}000 test images, has density $3.1\times10^{-5}$ and certifies \textsc{Yes} on 30 of 30 inputs. The buggy engine changes 1{,}942 clean predictions and has density 0.17 overall, but from 0 to 0.99 per input: it is rejected on 21 inputs after a median of 768 draws, returns \textsc{None} on 6 and is certified on 3 whose neighbourhoods the fault does not reach, while $\epsilon_{\mathrm{Hard}}$ moves by only 2\%. The two faults thus leave different footprints, the interpolation swap raising the density in every tested neighbourhood and the window-attention fault concentrating on some inputs, so the result depends on which neighbourhoods are checked. Clean testing also exposes both faults, and two cases do not establish comparative detection recall.

\fakeparagraph{The clean-label radius as a robustness measure} We ask whether \tool's certified radius agrees with an attack-based notion of robustness, using the white-box compilation-inconsistency search of Chen \etal~\cite{chen2025your}, which looks for a small $L_2$ perturbation making the compiled model leave the reference model's clean label, guided by the reference model's gradients since TensorRT engines expose none. \tool searches, from black-box queries, for the largest tested radius certified \textsc{Yes} under PROVERO's protocol~\cite{baluta2021provero}. The image sweep attacks 1000 inputs per setting and verifies 100 attack-success inputs, drawn separately per backend; the transformer runs use 100 inputs per setting. Because a searched radius spends the error budget once per probe, it carries at least $1-K_{\max}\delta$, i.e.\ 68--78\%, confidence over at most $K_{\max}=22$--$32$ probes fixed in advance by the cap $4\lVert x_0\rVert_2$ and the resolution, rather than the 99\% that holds per probe; splitting the budget restores 99\% at 1.49--1.54 times the samples and moves the radius by under 1\%. All tables use the per-probe mode (Tables~\ref{tab:all24}, \ref{tab:rq2_query} and~\ref{tab:rq2_base50}), PROVERO included, while Tables~\ref{tab:cond} and~\ref{tab:miscompile} certify at a single radius. A \textsc{Yes} certifies only its own ball, and the bisection assumes $p(\varepsilon)$ grows with $\varepsilon$, which need not hold: in one dimension with seed $0$ and violation set $1<|x|\le2$, $p(2)=0.5$ but $p(20)=0.05$. We therefore report the largest tested radius certified \textsc{Yes}, and a \textsc{None} probe only stops the search.

\fakeparagraph{Agreement with attack difficulty} Inputs needing a larger attack budget receive a larger certified radius (Figure~\ref{fig:rq1_scatter}; Pearson 0.93, 0.71, 0.82 and 0.78 on four representative settings). Across the 24 tasks, all of which qualify, the measures agree positively on 46 of 48 cells and significantly on 42 at the 5\% level (two-sided $t$-test on Pearson's $r$), 41 after Holm correction, averaging 0.55--0.56 by Pearson and 0.63 by Spearman; these averages are descriptive and describe attack-success inputs only. Rank agreement exceeds linear agreement almost everywhere (ResNet-50/CIFAR-10: 0.32 against 0.69), so \tool orders inputs by robustness even where a few very robust inputs flatten the line. Agreement depends on the subject: the ImageNet-scale ResNets sit between $-0.07$ and 0.20, and MobileNetV2/CIFAR-10, conclusive on 99 of 100 inputs, carries almost no signal (0.05 and 0.17, neither significant). It is nearly the same on the two backends, 0.53 and 0.54 by Pearson and 0.61 by Spearman on the 19 convolutional tasks and 0.63/0.62 and 0.72/0.70 on the five transformer tasks under Inductor/TensorRT, as expected of a radius governed by the model.
\begin{figure*}[t]
\centering
\begin{subfigure}{0.49\textwidth}
  \centering
  \includegraphics[width=\textwidth]{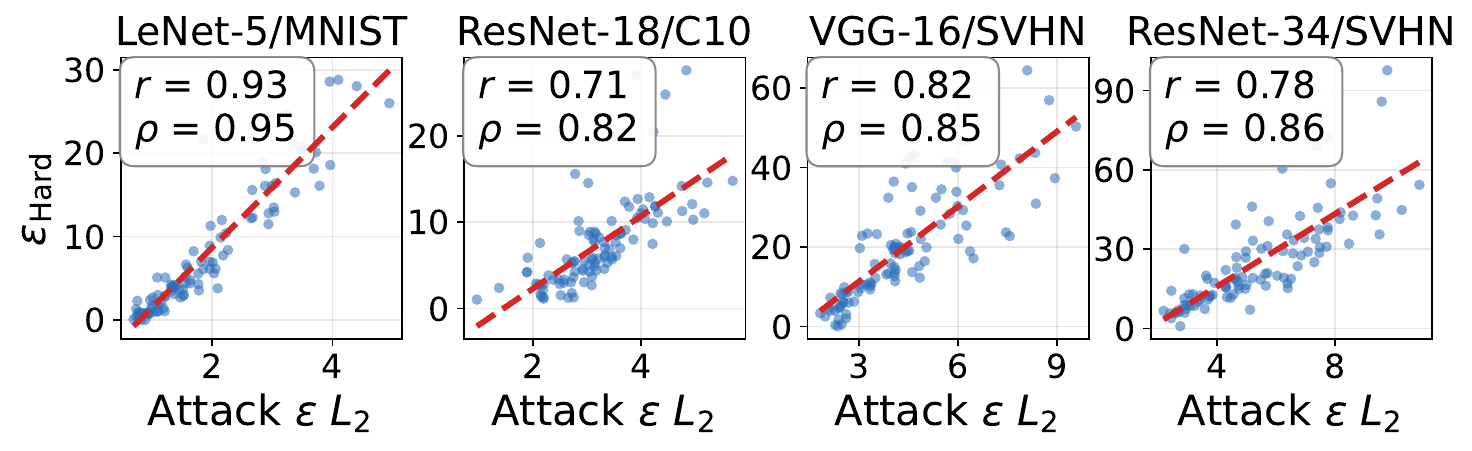}
  \caption{Certified radius vs.\ attack budget.}
  \label{fig:rq1_scatter}
\end{subfigure}
\hfill
\begin{subfigure}{0.49\textwidth}
  \centering
  \includegraphics[width=\textwidth]{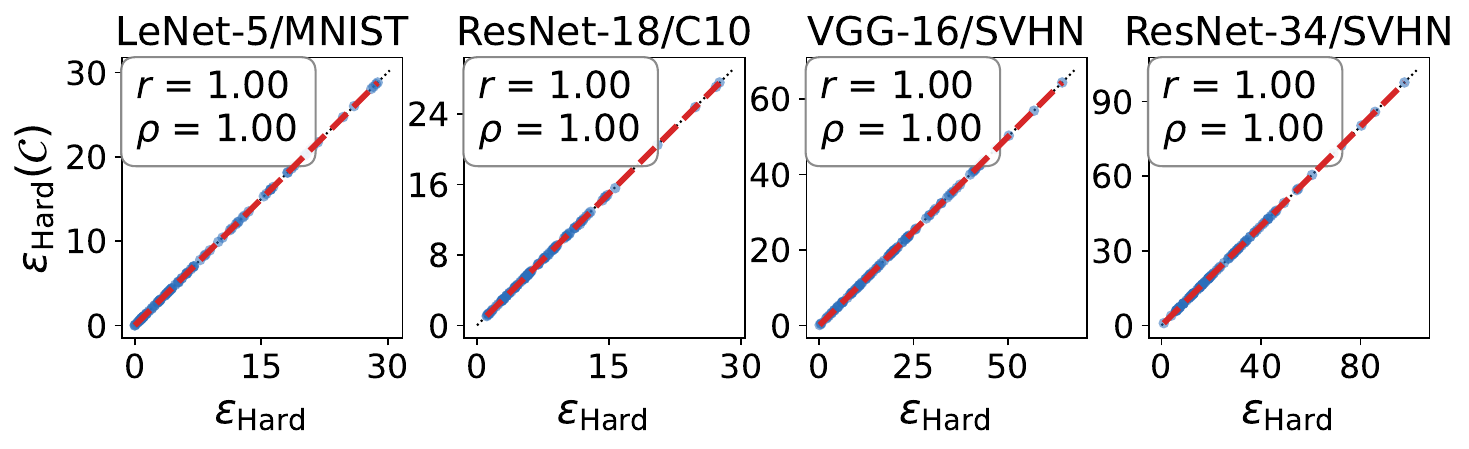}
  \caption{Compiled vs.\ reference certified radius.}
  \label{fig:selfm}
\end{subfigure}
\caption{Per-input agreement with the flipping budget found by the attack's heuristic search (a), and the reference-model control (b). Each point is one input; the red dashed line is the least-squares fit.}
\label{fig:rq1_pair}
\vspace{-5mm}
\end{figure*}

{%
\fakeparagraph{What the clean-label radius attributes to the compiler} Certifying the \emph{uncompiled} reference model yields indistinguishable certificates (Figure~\ref{fig:selfm}): the radii correlate at 1.00 with a median per-input difference of 0.00. At the certified radius, 95.1--99.8\% of the counted flips also flip the reference model, leaving 0.2--4.9\% as reference-stable disagreement at a density of $10^{-5}$ to $5\times10^{-4}$, well below the threshold. The certified radius thus measures the model's own noise robustness, and these correlations validate the certified quantity rather than a compiler-specific signal.
}

\fakeparagraph{Attack on transformer subjects} The original attack does not carry over to transformers: on our vision transformers it returned nearly the same budget for every input (coefficient of variation 0.17 on ViT-Base/CIFAR-10, against 0.30 on the convolutional subjects). We therefore extended it to transformers; the original authors of~\cite{chen2025your} reviewed this extension and endorsed it as a reasonable extension of their attack, consistent with its intent. It optimizes the reference model's top-two logit gap and bisects along the resulting direction, yielding budgets with coefficient of variation 0.84, 92--100 of 100 conclusive certificates per cell, and higher rank agreement on all ten transformer cells, by 0.27 on average. We read the transformer rows, like the text rows, as a diagnostic of whether the certificate orders inputs as an attack does.

\fakeparagraph{Text classifiers} For BERT-base on SST-2 and AG News and DistilBERT on SST-2 under both backends, we certify a token-level perturbation set: length is fixed and each position is independently replaced with probability $p$ by one of its ten nearest embedding neighbours. A greedy forward-query attack searches the same set, so both sides report a substitution rate and $p_{\mathrm{Hard}}$ replaces $\epsilon_{\mathrm{Hard}}$. Agreement is positive on all six settings, $0.33$--$0.80$ by Pearson and $0.26$--$0.75$ by Spearman, and the attack budget is essentially uncorrelated with sentence length ($0.03$); we claim only that this distribution is well defined, not that substitutions preserve meaning. The attack flips $64$--$100$ of 100 inputs, so $n$ differs per cell; conclusive means a positive $p_{\mathrm{Hard}}$, and the excluded inputs concentrate on TensorRT.

\takeaway{Answers to \textbf{RQ1}}{Under uniform $L_2$ neighbourhoods at each seed's radius, pointwise checks at $\theta=10^{-3}$ return 817 \textsc{Yes}, 19 \textsc{None}, and 4 \textsc{No} across 840 input--backend instances from 21 tasks. The two documented fault cases illustrate local rejection, acceptance, and abstention patterns rather than general bug-detection recall. The one-model clean-label radius tracks attack difficulty over the 24-task sweep, but the reference-model control shows it is largely model-dominated.}

\subsection{Sampling Cost and Conclusiveness}
\label{sec:rq2}

\begin{table*}[t]
\centering
\renewcommand{\arraystretch}{0.85}
\scriptsize
\setlength{\tabcolsep}{6pt}
\caption{Mean unique samples per input ($\times10^3$) to reach a verdict, per task and backend, averaged over every search, whatever it concluded. Each backend gives PROVERO first, then \tool with its speedup over PROVERO in parentheses.}
\label{tab:rq2_query}
\begin{tabular}{ll cc cc}
\toprule
\multirow{2}{*}{Model} & \multirow{2}{*}{Dataset} & \multicolumn{2}{c}{inductor} & \multicolumn{2}{c}{tensorrt} \\
\cmidrule(lr){3-4} \cmidrule(lr){5-6}
& & PROVERO & \tool & PROVERO & \tool \\
\midrule
LeNet-5 & MNIST & 408.6 & \textbf{141.5} (2.89$\times$) & 371.3 & \textbf{126.3} (2.94$\times$) \\
\midrule
\multirow{4}{*}{ResNet-18} & CIFAR-10 & 336.5 & \textbf{111.4} (3.02$\times$) & 354.0 & \textbf{117.2} (3.02$\times$) \\
 & CIFAR-100 & 317.8 & \textbf{105.0} (3.03$\times$) & 324.5 & \textbf{110.4} (2.94$\times$) \\
 & SVHN & 431.2 & \textbf{152.3} (2.83$\times$) & 484.4 & \textbf{167.3} (2.89$\times$) \\
 & ImageNet-1k & 572.2 & \textbf{205.0} (2.79$\times$) & 563.3 & \textbf{203.8} (2.76$\times$) \\
\midrule
\multirow{3}{*}{ResNet-34} & CIFAR-10 & 335.3 & \textbf{114.2} (2.94$\times$) & 369.1 & \textbf{124.3} (2.97$\times$) \\
 & CIFAR-100 & 309.2 & \textbf{101.4} (3.05$\times$) & 318.9 & \textbf{105.9} (3.01$\times$) \\
 & SVHN & 427.4 & \textbf{147.0} (2.91$\times$) & 494.2 & \textbf{173.1} (2.86$\times$) \\
\midrule
\multirow{4}{*}{ResNet-50} & CIFAR-10 & 340.2 & \textbf{110.5} (3.08$\times$) & 344.9 & \textbf{114.9} (3.00$\times$) \\
 & CIFAR-100 & 314.7 & \textbf{102.6} (3.07$\times$) & 315.1 & \textbf{106.0} (2.97$\times$) \\
 & SVHN & 444.9 & \textbf{154.2} (2.88$\times$) & 489.8 & \textbf{168.7} (2.90$\times$) \\
 & ImageNet-1k & 562.3 & \textbf{200.5} (2.80$\times$) & 577.5 & \textbf{203.6} (2.84$\times$) \\
\midrule
\multirow{3}{*}{VGG-16-BN} & CIFAR-10 & 356.1 & \textbf{121.3} (2.94$\times$) & 365.8 & \textbf{125.8} (2.91$\times$) \\
 & CIFAR-100 & 297.9 & \textbf{99.7} (2.99$\times$) & 320.6 & \textbf{109.4} (2.93$\times$) \\
 & SVHN & 465.4 & \textbf{163.5} (2.85$\times$) & 501.2 & \textbf{176.6} (2.84$\times$) \\
\midrule
\multirow{3}{*}{DenseNet-121} & CIFAR-10 & 310.4 & \textbf{103.8} (2.99$\times$) & 311.5 & \textbf{105.7} (2.95$\times$) \\
 & CIFAR-100 & 282.8 & \textbf{95.3} (2.97$\times$) & 302.9 & \textbf{99.3} (3.05$\times$) \\
 & SVHN & 469.9 & \textbf{159.1} (2.95$\times$) & 489.5 & \textbf{172.4} (2.84$\times$) \\
\midrule
ViT-Base/16 & CIFAR-10 & 570.1 & \textbf{202.9} (2.81$\times$) & 583.7 & \textbf{213.0} (2.74$\times$) \\
\midrule
DeiT-Small & ImageNet-1k & 708.3 & \textbf{257.9} (2.75$\times$) & 709.2 & \textbf{262.2} (2.71$\times$) \\
\midrule
\multirow{3}{*}{ViT-B/16 in21k-ft} & CIFAR-10 & 615.7 & \textbf{222.8} (2.76$\times$) & 587.9 & \textbf{213.3} (2.76$\times$) \\
 & CIFAR-100 & 601.2 & \textbf{212.6} (2.83$\times$) & 580.2 & \textbf{207.7} (2.79$\times$) \\
 & SVHN & 664.6 & \textbf{234.1} (2.84$\times$) & 621.7 & \textbf{225.9} (2.75$\times$) \\
\bottomrule
\end{tabular}

\end{table*}

We count the unique draws \tool needs per radius search against PROVERO's adaptive certifier~\cite{baluta2021provero} at the same threshold, tolerance, and confidence. This comparison concerns the one-model clean-label deviation property. The two contracts differ: by default \tool satisfies Definition~\ref{def:sound}, which counts any decisive verdict inside the band as an error, whereas PROVERO, which always answers Yes or No, satisfies only the weaker indifference-zone contract. The full table compares the methods as released; the like-for-like comparison is the forced-decision one on the paired cohort below.

\begin{table}[t]
\centering
\begin{minipage}[t]{0.44\textwidth}
\centering\scriptsize
\setlength{\tabcolsep}{4pt}
\caption{Injected constant-folding faults.}
\label{tab:miscompile}
\resizebox{\linewidth}{!}{%
\begin{tabular}{rrrl}
\toprule
fault & clean top-1 & ref.-stable density & verdicts \\
\midrule
0      & 1000/1000 & $2.5\times10^{-4}$ & 30 Yes \\
0.005  & 1000/1000 & $2.8\times10^{-4}$ & 29 Yes, 1 None \\
0.02   & 1000/1000 & $3.8\times10^{-4}$ & 26 Yes, 4 None \\
0.05   & \phantom{0}999/1000 & $5.9\times10^{-4}$ & 19 Yes, 11 None \\
0.1    & \phantom{0}999/1000 & $1.0\times10^{-3}$ & 12 Yes, 11 None, 7 No \\
0.2    & \phantom{0}999/1000 & $2.1\times10^{-3}$ & 16 No, 9 Yes, 5 None \\
0.5    & \phantom{0}997/1000 & $7.2\times10^{-3}$ & 20 No, 7 Yes, 3 None \\
0.8    & \phantom{0}990/1000 & $1.6\times10^{-2}$ & 22 No, 7 Yes, 1 None \\
1.2    & \phantom{0}982/1000 & $3.8\times10^{-2}$ & 23 No, 7 Yes \\
\bottomrule
\end{tabular}%
}
\end{minipage}\hfill
\begin{minipage}[t]{0.53\textwidth}
\centering\scriptsize
\setlength{\tabcolsep}{3.2pt}
\caption{Paired comparison and ablations.}
\label{tab:rq2_base50}
\resizebox{\linewidth}{!}{%
\begin{tabular}{l rrrr}
\toprule
Method & LeNet/M$_{\mathrm{trt}}$ & RN18/C10$_{\mathrm{ind}}$ & VGG/SV$_{\mathrm{ind}}$ & RN34/SV$_{\mathrm{trt}}$ \\
\midrule
\tool, main (CS) & 90.7 (0.3) & 82.6 (1.6) & 118.9 (1.2) & 130.3 (6.5) \\
\tool, main (CP) & 102.3 (0.3) & 96.0 (1.5) & 131.5 (1.3) & 142.5 (6.8) \\
PROVERO & 375.0 (0.8) & 348.7 (7.2) & 473.0 (4.8) & 508.5 (25.9) \\
\midrule
\multicolumn{5}{l}{\emph{Interval-family mode}} \\
\tool-CS (anytime-valid) & 120.6 (0.3) & 106.8 (2.2) & 158.4 (1.6) & 164.3 (8.4) \\
\tool (CP) & 127.1 (0.4) & 120.5 (2.1) & 165.0 (1.6) & 176.6 (8.4) \\
\tool-CS-F (forced) & 155.3 (0.4) & 138.9 (2.8) & 200.5 (2.0) & 210.6 (10.7) \\
\tool-F (CP, forced) & 160.3 (0.5) & 144.5 (2.8) & 205.4 (2.1) & 216.9 (10.7) \\
\tool-UD (no alloc) & 154.6 (0.5) & 141.0 (3.4) & 202.8 (2.9) & 206.4 (10.1) \\
\tool-NR (no reuse) & 266.8 (0.7) & 250.3 (4.4) & 327.3 (3.3) & 341.7 (15.3) \\
\bottomrule
\end{tabular}%
}
\end{minipage}
\vspace{-3mm}
\end{table}

Table~\ref{tab:rq2_query} covers the 23 tasks on which both methods ran on both backends and reports the interval-family mode; the main algorithm is compared on the paired cohort below. \tool needs 2.89--2.91 times fewer unique samples than PROVERO on average, about 155 thousand against 445 thousand, and between 2.71 and 3.08 on every setting, the saving being smallest on the costlier transformer and ImageNet searches. The paired forced-decision comparison is reported separately below. Sample reuse acts within a probe, turning the cost of its interval tests from a sum into a maximum; probes at different radii share no draws, so a search costs the sum of its probes, dominated by the last probes near the boundary, which run to PROVERO's Chernoff-derived truncation budget and return \textsc{None}.

\fakeparagraph{Paired comparison} Table~\ref{tab:rq2_base50} reports mean unique draws in thousands per one-model clean-label radius search, with mean wall-clock seconds in parentheses, for a paired cohort of 50 inputs in each of four settings. Draw-count comparisons share one cached stream per input and radius. Timing runs are separate and uncached, including inference and buffered GPU evaluations not consumed by the tester. Thus, draws, model executions, and elapsed time remain distinct measures.

\fakeparagraph{Main algorithm} The main test consumes 102.3, 96.0, 131.5 and 142.5 thousand draws with the Clopper--Pearson (CP) rule and 90.7, 82.6, 118.9 and 130.3 thousand with the confidence-sequence (CS) rule, against 375.0, 348.7, 473.0 and 508.5 thousand for PROVERO, i.e.\ 3.6--4.2 times fewer at matched numerical parameters but under different decision contracts. It uses 0.80--0.81 (CP) and 0.75--0.79 (CS) of the interval-family mode's draws with radii within 1\% at the median, and with CS takes 0.3, 1.6, 1.2 and 6.5 seconds against 0.3, 2.2, 1.6 and 8.4 for the family and 0.8, 7.2, 4.8 and 25.9 for PROVERO on our platform. The family gains nothing at the default parameters: its only left interval, $(0,\theta)$, cannot return \textsc{Yes} under the CP rule, so acceptance comes from its final test, which receives about a third of $\delta$ and truncates at 27{,}038 draws rather than 22{,}105, and on synthetic streams with $\theta$ up to 50\% it was never cheaper, although this is no universal optimality result. Inconclusive probes account for 61\% of the main algorithm's draws, and 43\% of its searches end on a \textsc{None} probe inside the band; a shorter cap would trade sampling for more abstentions without affecting soundness.

\fakeparagraph{Compatibility-mode ablations} Removing reuse costs 1.96--2.21 times as many draws, and uniform allocation costs 1.17--1.23 times as many, relative to CP interval-family mode. The CS variant reduces that mode's cost by 3--11\%. These components reduce overhead within the family; they are not components of the main single-test algorithm. Forcing a decision costs 1.19--1.27 times as many draws in the reported variants and produces radii 5--8\% larger, with 94--100\% of searches ending on a probe reported to be inside the band. Comparing the forced variants under PROVERO's indifference-zone contract, PROVERO uses 2.30--2.41 times as many draws as CS-F on these four settings. The CP-F row is a separate comparison. Forced decisions inside the band do not satisfy Definition~\ref{def:sound}. Similarly, an SPRT configured for separated hypotheses~\cite{younes2002probabilistic} cannot be treated as satisfying this contract merely from its endpoint error guarantees.

\begin{table}[t]
\centering\scriptsize
\setlength{\tabcolsep}{4pt}
\caption{Stability across ten random seeds on LeNet-5/MNIST under TensorRT.}
\label{tab:seeds}
\resizebox{\textwidth}{!}{%
\begin{tabular}{ccccccc}
\toprule
Same verdict & Mean draws & Draws over seeds & Median $\epsilon_{\mathrm{Hard}}$ & Per-input $\epsilon_{\mathrm{Hard}}$ spread & Within one step & Per-input draw CV \\
\midrule
20/20 & 141.8k (CV 2.3\%) & 136.1k--147.2k & 6.28--6.65 & median 0.30, max 1.40 & 8/20 & median 9.2\%, max 21.7\% \\
\bottomrule
\end{tabular}%
}

\end{table}

\fakeparagraph{False-certification rate} We ran 1000 verifications at each of 12 known densities below, inside and above the band. No Yes occurred in the 5000 runs above the band and no No in the 5000 below it, a one-sided Clopper--Pearson 95\% bound of 0.3\% per density and 0.06\% pooled, below the 1\% allowed. Under the stricter Definition~\ref{def:sound}, the 2000 in-band runs yield one decisive verdict, a \textsc{No} at exactly $p=\theta+\eta$. Near the threshold the certifier declines to answer on 88--100\% of streams rather than guessing. Away from the band it is decisive: every one of the 1000 runs answers \textsc{Yes} at densities up to $\theta/2$ and \textsc{No} from $2(\theta+\eta)$ upward, so soundness is not bought by abstaining.

\fakeparagraph{Stability across seeds} Re-running LeNet-5/MNIST under TensorRT on 20 inputs with ten seeds (Table~\ref{tab:seeds}) leaves every verdict unchanged; mean cost varies by 2.3\% and the median radius stays between 6.28 and 6.65. A single input's radius varies more, with a median spread of 0.30 (about 5\%) and at most 1.40, so cohort-level figures are reproducible while per-input radii carry an uncertainty of a few grid steps.

\takeaway{Answers to \textbf{RQ2}}{PROVERO uses 3.6--4.2 times as many draws as the main test on paired one-model radius searches, with different decision contracts. The corresponding interval-family tests consume more draws than the main test in all four settings. Conclusiveness remains a separate outcome: inconclusive probes consume 61\% of the main test's draws, and the known-density study reports one erroneous decisive verdict in 2,000 in-band runs and none in 10,000 outside-band runs.}

\subsection{\ref{rq:application} Local Robustness across Configurations}

We evaluate the one-model clean-label robustness instantiation under quantization, a second GPU generation, and a max-norm perturbation distribution. The same test applies under A1--A3. These experiments measure local robustness under the selected settings. Equivalence between reference and compiled models is a separate property.

\begin{figure*}[t]
\centering
\hspace*{\fill}\begin{subfigure}{0.52\textwidth}
  \centering
  \includegraphics[width=\textwidth]{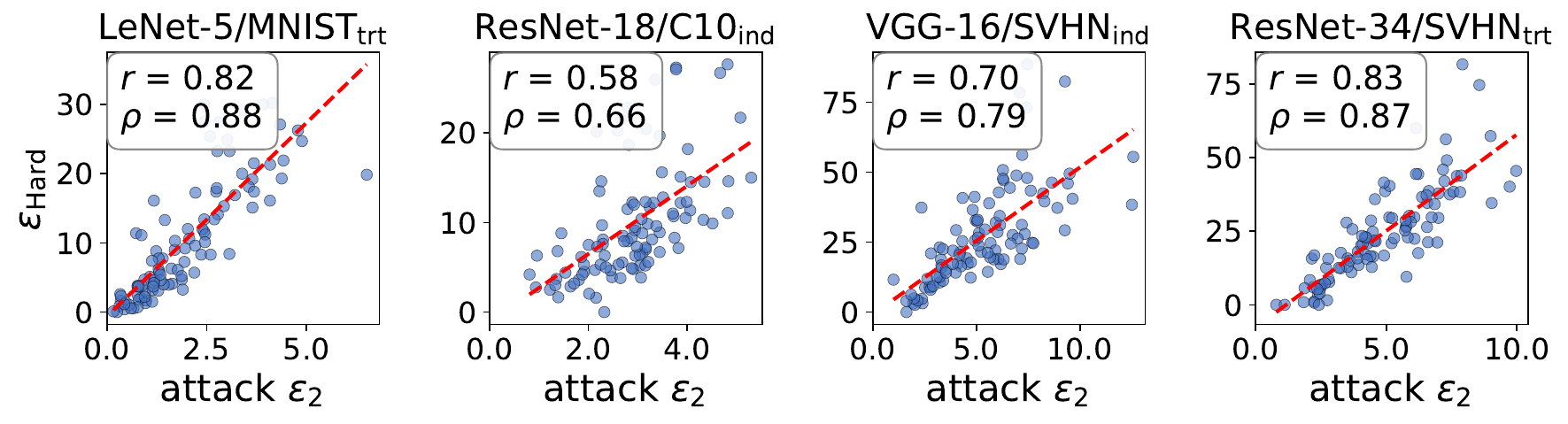}
  \caption{Simulated INT8 targets.}
  \label{fig:quant}
\end{subfigure}
\hfill
\begin{subfigure}{0.32\textwidth}
  \centering
  \includegraphics[width=\textwidth]{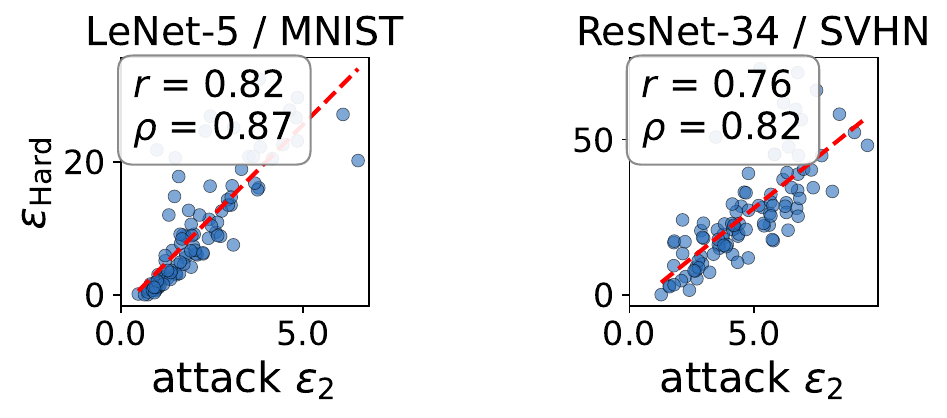}
  \caption{{Real TensorRT INT8 engines.}}
  \label{fig:int8real}
\end{subfigure}
\caption{Certified hardness of quantized models against the attack's heuristic flipping budget.}
\label{fig:quant_pair}
\vspace{-5mm}
\end{figure*}

\begin{figure*}[t]
\centering
\begin{subfigure}{0.49\textwidth}
  \centering
  \includegraphics[width=\textwidth]{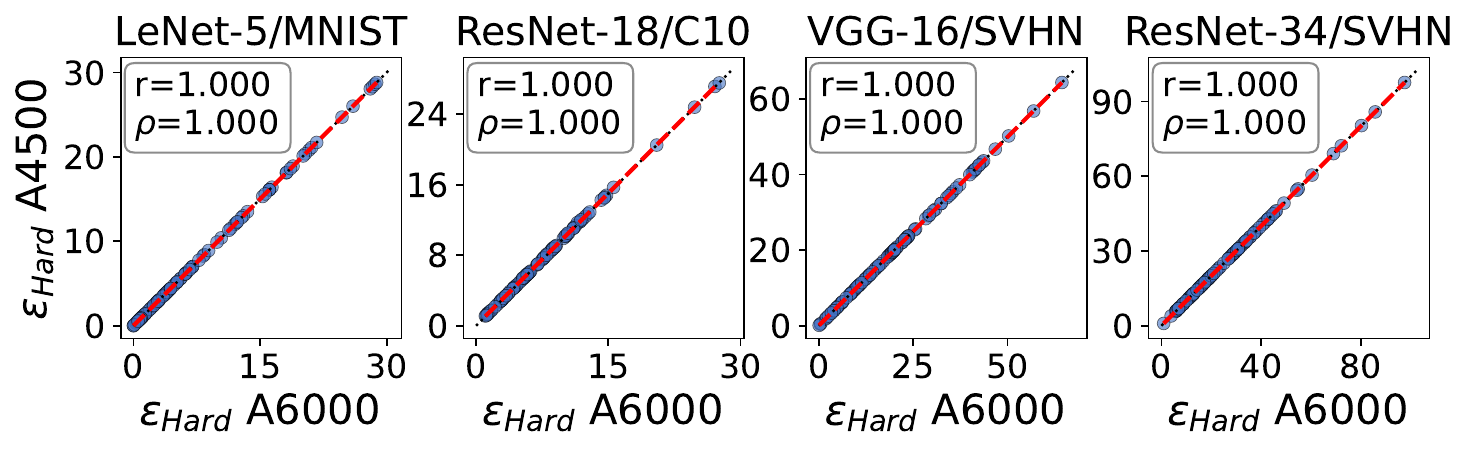}
  \caption{Cross-hardware: RTX 6000 Ada vs.\ RTX 4500.}
  \label{fig:xhw}
\end{subfigure}
\hfill
\begin{subfigure}{0.49\textwidth}
  \centering
  \includegraphics[width=\textwidth]{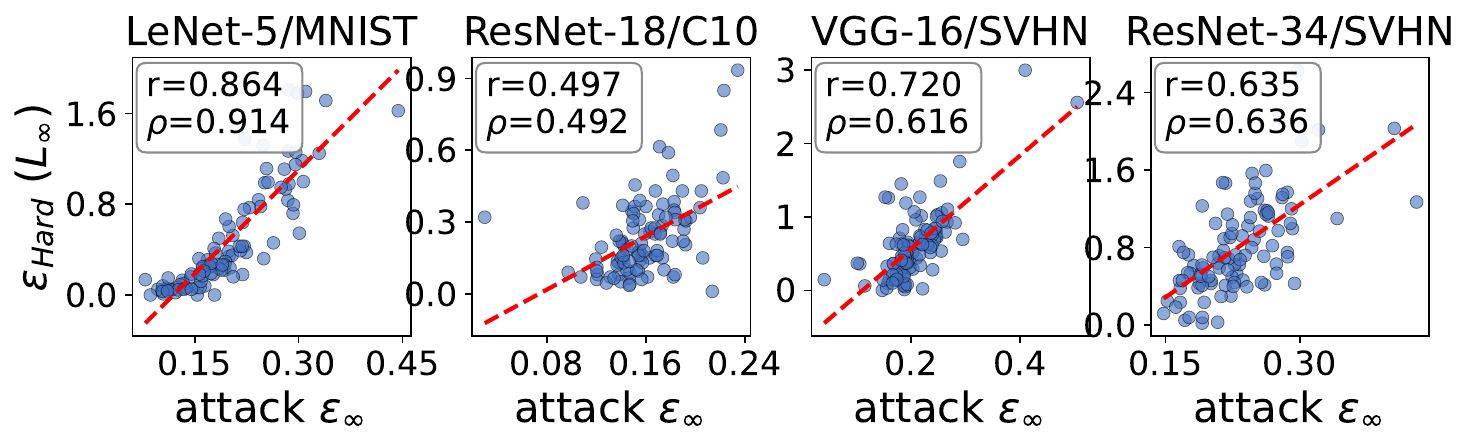}
  \caption{$L_\infty$ threat model.}
  \label{fig:linf}
\end{subfigure}
\centering
\caption{One-model local robustness under hardware and perturbation changes.}
\label{fig:rq3_pair}

\end{figure*}

\noindent\textbf{Quantized DNNs.}\label{sec:quant} We keep the FP32 model as reference and apply simulated static INT8 post-training quantization to its counterpart (per-channel weights, per-tensor activations, 64 calibration inputs), inserted into the compiled model so that it carries faithful INT8 rounding error. Inputs whose clean prediction quantization already flips return no verdict by construction (2 of 400). \tool is conclusive on 397 of 400 inputs, and the certificate rises with the attack budget on every setting, 0.58--0.83 by Pearson and 0.66--0.88 by Spearman (Figure~\ref{fig:quant}). As under FP32, the certified radius exceeds the attack's single-direction budget, with fitted slopes of 3.8--6.5, because it concerns randomly sampled perturbations.

\noindent\textbf{Real INT8 Engines.} Actual TensorRT INT8 engines for the two TensorRT settings (entropy calibration) are attackable at 55.3\% and 65.3\% over 300 inputs, and \tool certifies them unmodified: 100 and 99 of 100 conclusive, correlating 0.82 and 0.76 by Pearson and 0.87 and 0.82 by Spearman (Figure~\ref{fig:int8real}), with cohort radii close to the simulation (9.5 against 10.6, 26.0 against 24.4).

\noindent\textbf{Across Hardware Platforms.}\label{sec:cross-hardware} On an Ampere RTX A4500 with identical inputs, seeds and parameters and engines built for that GPU, the certified radii of all four settings agree with the Ada runs at correlation 1.000 and unit slope (Figure~\ref{fig:xhw}), and under TensorRT radii and sample counts are identical on all 200 inputs: the certificate is reproducible across hardware, as a model-dominated radius predicts.

\noindent\textbf{A Different Threat Model.}\label{sec:linf} Sampling from a box instead of a ball certifies the max-norm density under the same guarantee; on the same four settings the certified radius again rises with the attack budget, 0.50--0.86 by Pearson and 0.49--0.91 by Spearman (Figure~\ref{fig:linf}). Only the sampler changes; the verifier, its contract and its error accounting are unchanged, so the guarantee carries over to the new distribution as stated.

\takeaway{Answers to \textbf{RQ3}}{The same one-model verifier runs in all three evaluated configurations. The robustness radii reproduce across the studied GPU generations (correlation 1.000), remain conclusive on 397 of 400 simulated INT8 inputs, and correlate with attack budgets under the max-norm distribution (Pearson 0.50--0.86). These are local robustness results under each configuration's sampling and oracle assumptions. Moving from the $L_2$ ball to the $L_\infty$ box required only a different sampler.}

\subsection{\ref{rq:sensitivity} Parameter Study}
\label{sec:rq4}

\begin{wraptable}{r}{0.42\textwidth}
\centering\scriptsize
\renewcommand{\arraystretch}{0.85}
\vspace{-4mm}
\setlength{\tabcolsep}{3pt}
\caption{Parameter sensitivity of one-model clean-label radius searches on ResNet-18/CIFAR-10 under Inductor.}
\label{tab:param}
\begin{tabular*}{\linewidth}{@{\extracolsep{\fill}}llrrr}
\toprule
varied & value & concl. & radius & draws \\
\midrule
default & --- & 100\% & 6.05 & 109,061 \\
$\theta$ & 0.5\% & 100\% & 5.80 & 199,371 \\
$\theta$ & 2\% & 100\% & 6.40 & 57,582 \\
$\theta$ & 5\% & 100\% & 7.40 & 25,978 \\
$1-\delta$ & 99.9\% & 100\% & 6.05 & 148,694 \\
$1-\delta$ & 95\% & 100\% & 6.05 & 85,034 \\
$\eta$ & 0.25\% & 100\% & 6.15 & 301,219 \\
$\eta$ & 1\% & 100\% & 5.95 & 45,287 \\
\bottomrule
\end{tabular*}
\vspace{-4mm}
\end{wraptable}

For the one-model clean-label radius experiment, we vary one parameter at a time around the default ($\theta=1\%$, $\eta=0.5\%$, 99\% confidence) on ResNet-18/CIFAR-10 under Inductor, with 100 inputs per configuration (Table~\ref{tab:param}). {All eight configurations are conclusive on all 100 inputs, so within this grid the parameters change cost and the location of the reported boundary, not whether a verdict is reached.} {\emph{Threshold.} The median radius rises monotonically with $\theta$, from 5.80 at 0.5\% through 6.05 and 6.40 to 7.40 at 5\%, because a larger tolerated violation density is first exceeded in a larger ball. Cost falls roughly in inverse proportion to $\theta$, at 1.83, 1, 0.53 and 0.24 times the default draws, consistent with the order of $\ln(1/\delta)/\theta$ draws a sequential test needs to establish $p\le\theta$ when $p$ lies well below $\theta$. \emph{Margin.} Halving $\eta$ to 0.25\% multiplies the draws by 2.76 and doubling it to 1\% by 0.42, while the median radius moves by at most one grid step (5.95--6.15). The growth is smaller than the fourfold growth of the truncation cap because only probes near the boundary run to the cap; probes far from the band stop early. \emph{Confidence.} Raising the confidence from 99\% to 99.9\% costs 1.36 times the draws and lowering it to 95\% saves a quarter (0.78), close to the logarithmic factors 1.50 and 0.65 of $\ln(1/\delta)$, and leaves the median radius at exactly 6.05. Tightening the confidence is therefore cheap, whereas $\theta$ and $\eta$ set both the cost and the meaning of a verdict.} This sensitivity result concerns the tested grid on one model-dataset-backend setting.

\takeaway{Answers to \textbf{RQ4}}{%
{All eight configurations are conclusive on every input, and the certified
radius moves by about a quarter across the grid while sampling cost spans 11.6
times. {The threshold drives both radius and cost, halving the margin
multiplies the cost by 2.8, and confidence enters roughly logarithmically
without moving the radius.}}}

\subsection{\ref{rq:distribution} Beyond Norm-Bounded Perturbations}
\label{sec:rq5}

Since the perturbation distribution is an input to \tool, we test one that is not a norm ball around the whole input: only the nine pixels of the top-left $3\times3$ patch of an MNIST image change, each shifted independently and uniformly within $\pm\epsilon$ without clipping, and the radius search grows this per-pixel magnitude. The attack is restricted to the patch likewise, measures its budget as the largest per-pixel change, and uses a step size of 0.5 and 1000 steps because nine pixels must move far to change a decision. The protocol otherwise follows Section~\ref{sec:rq1} on LeNet-5/MNIST under both backends, with 100 attack-success inputs each, except that a certificate is conclusive when its radius is positive and below the search cap of 400.

\begin{wraptable}{r}{0.42\textwidth}
\centering\scriptsize
\renewcommand{\arraystretch}{0.85}

\setlength{\tabcolsep}{3pt}
\caption{Certificates under the patch distribution on LeNet-5/MNIST.}
\label{tab:patch}
\begin{tabular*}{\linewidth}{@{\extracolsep{\fill}}lcc}
\toprule
 & TorchInductor & TensorRT \\
\midrule
Conclusive & 100/100 & 99/100 \\
Pearson & 0.60 & 0.79 \\
Spearman & 0.80 & 0.89 \\
Median $\epsilon_{\mathrm{Hard}}$ & 14.3 & 16.8 \\
\bottomrule
\end{tabular*}

\end{wraptable}

Table~\ref{tab:patch} shows that the certificates stay conclusive on 100 and 99 of 100 inputs and agree with attack difficulty at 0.60 and 0.79 by Pearson and 0.80 and 0.89 by Spearman ($p<10^{-9}$), at 165 and 171 thousand draws on average. {The certified radii spread over more than an order of magnitude, from 2.9 to 52.9 (10th to 90th percentile) under TorchInductor and from 1.9 to 56.2 under TensorRT, so the certificate distinguishes inputs whose corner patch barely matters from inputs that react to it. Spearman exceeds Pearson on both backends, which indicates a monotone but nonlinear relation between the two measures, and agreement is higher under TensorRT. The certified radius stays close to the attack budget, at 0.91 and 0.87 of it at the median, and lies below it on 57\% and 64\% of inputs, as expected of a density-based radius that a single worst-case direction can exceed. Most radii, 87\% and 85\%, exceed the whole valid pixel range of 3.25 normalized units, so changing LeNet-5's decision through this patch requires values far outside the image domain; the unclipped distribution makes such a sensitivity measurable rather than truncating it at the range limit. Almost every search, 99 of 100 on each backend, ends on an inconclusive probe near the boundary, and one TensorRT input is rejected already at radius zero, where the compiled engine changes the clean prediction itself.} Only the sampler's support changed, so \tool applies to distributions defined by where an input may change, not only to $L_p$-bounded balls.

\takeaway{Answers to \textbf{RQ5}}{%
{On an unclipped $3\times3$ patch distribution, \tool certifies 199 of 200 inputs and its radii track attack difficulty (Spearman 0.80 and 0.89) with the certifier unchanged: its guarantee is tied to the distribution the user specifies beyond one specific perturbation geometry.}}

\section{Discussion and Threats to Validity}
\label{sec:discussion}

\fakeparagraph{Meaning of a verdict} A verdict bounds a violation probability for one specified distribution and binary oracle. A low pointwise disagreement probability can support a local regression check or a comparison of a faulty build with its fix, but we have not evaluated such workflows, and the framework neither selects seeds, calibrates thresholds, nor diagnoses faults. A \textsc{Yes} permits rare violations and implies neither global equivalence nor worst-case robustness.

\fakeparagraph{Construct validity} The clean-label radius is dominated by model sensitivity (Section~\ref{sec:rq1}), so its agreement with attack difficulty is not evidence of compiler correctness. The reference-stable property is a joint, not a conditional, probability, and a decision disagreement is a behavioral difference that becomes a semantic error only under an additional correctness specification. Unclipped uniform perturbations define a test distribution, not deployment traffic, and token substitutions need not preserve meaning.

\fakeparagraph{Internal validity} The proof assumes independent draws and a stateless oracle. Our audits found no decision changes under repeated, permuted, or recomposed batches (Section~\ref{sec:setup}), which supports A3 for the audited settings only, and TensorRT's batch-size dependence ties each certificate to its execution configuration. Matched parameters do not make PROVERO's weaker contract equivalent to ours; the forced-decision variants compare the two under that contract.

\fakeparagraph{Conclusion validity} The error bounds hold for one fixed distribution. Radius searches spend $\delta$ per probe and carry no search-wide $1-\delta$ guarantee (Section~\ref{sec:rq1}), and simultaneous claims over many seeds or configurations would need further allocation. A distribution chosen from earlier observations keeps the guarantee only with fresh draws. Bisection assumes monotonicity, so we report the largest tested radius certified \textsc{Yes}, and \textsc{None} can occur on either side of the band.

\fakeparagraph{External validity} The image sweep, the pointwise study and the paired comparison cover different populations; attack-success selection, per-backend cohorts and a single sampling seed outside the ten-seed study limit generalization. Two documented faults illustrate verdict patterns but do not measure detection recall, and the deployment-shift experiments concern one-model robustness. Near $10^{-4}$, two-model checks still need millions of executions; rare-event estimators such as importance sampling or multilevel splitting~\cite{webb2019statistical} are a direction for future work.
{%
\section{Related Work}
\label{sec:related}

\fakeparagraph{Testing and verifying DL compilers} Fuzzing finds bugs in DL compilers and frameworks by generating computation graphs or IR programs and differentially executing them: NNSmith~\cite{liu2022nnsmith}, MLIRSmith~\cite{wang2023mlirsmith}, GenCoG~\cite{wang2023gencog}, {coverage-guided fuzzing~\cite{liu2022coverage}. Library differential testing includes} CRADLE~\cite{pham2019cradle}, LEMON~\cite{wang2020deep}, Audee~\cite{guo2020audee}, and EAGLE~\cite{li2024eagle}. Related studies investigate metamorphic testing of compilation~\cite{xiao2022metamorphic} and compiler bugs and conversion errors~\cite{jajal2024interoperability,shen2024tale,chen2025scuzer,lin2023deepdiffer}. {These approaches investigate implementation bugs using different oracles; numerical differential comparisons can require a tolerance.} Chen \etal~\cite{chen2025your} show decision-level disagreements persist regardless of the threshold. Formal work verifies individual transformations: TASO~\cite{jia2019taso} and PET~\cite{wang2021pet} on graph rewrites, ATL~\cite{yu2023actively} on scheduling rewrites, TensorRight~\cite{arora2025tensorright} on unbounded tensor sizes, and mlir-tv on MLIR translation validation, following Alive2. Meanwhile, relational verifiers compare network pairs: ReluDiff~\cite{paulsen2020reludiff} and NeuroDiff~\cite{paulsen2020neurodiff} bound output differences, QEBVerif~\cite{zhang2023qebverif} and Giacobbe \etal~\cite{giacobbe2020many} verify quantized networks, DiffChaser~\cite{xie2019diffchaser} searches black-box for disagreements, and defensive quantization studies robustness effects. Their verification targets and required access differ from our probability-threshold contract. Our procedure observes end-to-end executions through a binary property oracle; decision comparison avoids a tolerance on numerical outputs, while the input distribution, violation threshold, and reference behavior remain explicit parts of the specification.

\fakeparagraph{Reliability of efficient models} Efficiency-oriented model behavior can itself be fragile: adversarial inputs can erase the savings of dynamic transformers~\cite{chen2023dynamic} or jointly degrade the quality and efficiency of dialogue generation~\cite{li2023white}, and natural-looking prompts can inflate the inference cost of large language models~\cite{chen2026naturalsloth}. Evaluation on fixed test sets can also overstate reliability, for example under benchmark contamination~\cite{chen2025benchmarking}, which motivates targeted benchmarks for specific model capabilities~\cite{zeng2025fineedit}. These observations call for guarantees stated over an explicit input distribution rather than a fixed set of tested inputs.

\fakeparagraph{Probabilistic certification and sequential testing} PROVERO~\cite{baluta2021provero} is the most closely related work. It develops black-box quantitative verification for a single neural network and instantiates it with probabilistic robustness: sampled perturbations are checked against that model's prediction at the seed. We use it as the baseline for the one-model clean-label radius-search comparison in Section~\ref{sec:rq2}. Our framework also instantiates reference--compiled decision consistency, whose oracle compares two executions on the same sampled input. This paired property asks a different question from single-model robustness and requires both models to be executed for each draw. Other related approaches include NPAQ, PROVEN, statistical robustness assessment~\cite{webb2019statistical} and statistical model checking~\cite{younes2002probabilistic,legay2010statistical}. Randomized smoothing~\cite{lecuyer2019pixeldp} also samples with exact binomial confidence, but yields a \emph{worst-case} radius for a \emph{smoothed} classifier, whereas $\epsilon_{\mathrm{Hard}}$ bounds a disagreement \emph{density} of the deployed model; the guarantees are incomparable. Unlike PROVERO's interval-family procedure, \tool uses a single sequential test for a stronger three-valued contract, in which a decisive output inside $(\theta,\theta+\eta]$ counts as an error and \textsc{None} is returned when evidence is insufficient. Sequential tests, exact binomial intervals and confidence sequences are established tools; our contribution is the single-test procedure under this contract, its analysis, and its evaluation on compiler-related properties, with the sampling reductions of Section~\ref{sec:rq2} reported separately for each contract.}
\section{Conclusion}

We presented \tool, a query-efficient framework for black-box probabilistic verification of consistency properties of DL compiler optimization. Given a fixed, sampleable input distribution and a binary property oracle, its single sequential test returns acceptance, rejection, or an inconclusive verdict with direction-specific error control under independent sampling and stateless oracle behavior. The test supports this three-valued contract without PROVERO's interval family, and its finite cap bounds cost rather than guaranteeing a decisive answer. In our evaluation, \tool needs 3.6--4.2 times fewer unique samples than PROVERO at matched parameters, certifies reference--compiled consistency on 817 of 840 inputs at $\theta=10^{-3}$, and separates two documented compiler faults from correct builds; its verdicts remain specific to the chosen distribution, property, and execution configuration.

\bibliographystyle{unsrtnat}
\bibliography{reference,simin}
\end{document}